\newtheorem{theorem}{Theorem}
\newtheorem{theorem*}[theorem]{Theorem*}
\newtheorem{definition}{Definition}
\newtheorem{definition*}[definition]{Definition*}
\newtheorem{example}{Example}
\newtheorem{example*}[example]{Example*}
\newtheorem{lemma}{Lemma}
\newtheorem{lemma*}[lemma]{Lemma*}
\newtheorem{proposition}{Proposition}
\newtheorem{proposition*}[proposition]{Proposition*}
\newcommand{\Mcal}{\mathcal{M}}
\newcommand{\Bcal}{\mathcal{B}}
\newcommand{\Ucal}{\mathcal{U}}
\newcommand{\N}{\mathbb{N}}
\newcommand{\R}{\mathbb{R}}
\newcommand{\BNE}{\textrm{BNE}}
\newcommand{\Nash}{\textrm{N}_{eq}}
\newcommand{\UNash}{\textrm{U}_{eq}}
\newcommand{\TNash}{\textrm{T}_{eq}}
\newcommand{\PNash}{\textrm{P}_{eq}}
\newcommand{\ENash}{\textrm{E}_{eq}}
\newcommand{\RNash}{\textrm{R}}
\newcommand{\exputil}{\textrm{E}}
\begin{document}
\bibliographystyle{elsart-harv}
\title{Empirical equilibrium\footnote{This paper benefited from comments of Dustin Beckett, Antonio Cabrales, Thomas Palfrey, Brian Rogers, and Anastasia Zervou as well as audiences at the 2018 Naples Workshop on Equilibrium Analysis, the 2019 World Economic Science Association Meetings, the 6th annual Texas Experimental Association Symposium, Boston College, Chapman U., Kellogg (MEDS), NCSU, Ohio State University, TETC18, UCSD, U. Maryland, U. Rochester, UTDallas and U. Virginia.  Sec.~\ref{Sec:Harsanyapproach} of this paper subsumes the results circulated in a note studying the empirical content of monotone randomly disturbed payoff models \citep{Velez-Brown-2019-Harsanyi}. An earlier version of this paper contained an application of empirical equilibrium to implementation theory. The implementation results are now part of a companion paper \citep{Velez-Brown-2018-EI}.   All errors are our own.}}
\date{\today}

\author{Rodrigo A. Velez\thanks{
\href{mailto:rvelezca@tamu.edu}{rvelezca@tamu.edu}; \href{https://sites.google.com/site/rodrigoavelezswebpage/home}{https://sites.google.com/site/rodrigoavelezswebpage/home}}\ \ and Alexander L. Brown\thanks{
 \href{mailto:alexbrown@tamu.edu}{alexbrown@tamu.edu}; \href{http://people.tamu.edu/\%7Ealexbrown}{http://people.tamu.edu/$\sim$alexbrown}} \\\small{\textit{Department of
Economics, Texas A\&M University, College Station, TX 77843}}}
\maketitle

\begin{abstract}
We study the foundations of \textit{empirical equilibrium}, a refinement of Nash equilibrium that is based on a non-parametric characterization of empirical distributions of behavior in games \citep{Velez-Brown-2019-SP}. The refinement can be alternatively defined as those Nash equilibria that do not refute the regular QRE theory of \citet{Goeree-Holt-Palfrey-2005-EE}. By contrast, some empirical equilibria may refute monotone additive randomly disturbed payoff models. As a by product, we show that empirical equilibrium does not coincide with refinements based on approximation by monotone additive randomly disturbed payoff models, and further our understanding of the empirical content of these models.
\medskip
\begin{singlespace}

\medskip

\textit{JEL classification}: C72, D47, D91.
\medskip

\textit{Keywords}: equilibrium refinements; behavioral game theory; regular quantal response equilibrium; empirical equilibrium; randomly disturbed payoff models.
\end{singlespace}
\end{abstract}

\section{Introduction}\label{Sec:intro}


We study the foundations of \textit{empirical equilibrium}, a refinement of Nash equilibrium in normal-form games introduced by \citet{Velez-Brown-2019-SP}. We provide two alternative characterizations of the set of empirical equilibria of a game, and determine its relationship with previous refinements in the literature. The first characterization eases its computation in applications. The second characterization provides a direct connection between one of the most popular models for the analysis of data from economics experiments and this refinement.

It is well known that the set of Nash equilibria of a game may contain elements that are implausible. That is, it is somehow evident that some Nash equilibria are not likely to be observed if the game actually takes place (see Example~\ref{Ex:Gamma_1}).  If one is interested in the positive content of this theory, this is problematic. In applications, one often performs extreme case scenario analyses based on the set of outcomes predicted for a certain game by a solution concept. If some of these outcomes are implausible, then a worst case scenario analysis can be unnecessarily pessimistic, and a best case scenario analysis can be unrealistically optimistic.

The response of game theory and economics to this problem has been to advance equilibrium refinements, i.e., selections from the Nash equilibrium solution. With only few exceptions---which have never been used in applications nor further studied (details below)---equilibrium refinements determine as implausible each equilibrium in which an agent plays a weakly dominated action with positive probability (from \citealp{Selten-1975-IJGT} and \citealp{Myerson-1978-IJGT}, to \citealp{K-Mertens-1986-Eca} and the latest iterations in \citealp{Milgrom-Mollner-2017-SSRN} and \citealp{Fudenberg-He-2018}; see \citealp{VanDamme-1991-Springer} for an earlier survey).  By contrast, experimental studies have shown that weakly dominated behavior is persistently observed in strategic situations.  In particular, a robust body of evidence has emerged from experiments in games that have dominant strategy equilibria induced by strategy-proof mechanisms \cite[see][for a multiple study analysis]{Velez-Brown-2019-SP}.

Motivated by the gap between theory and data, \citet{Velez-Brown-2019-SP} proposed empirical equilibrium, a refinement of Nash equilibrium that is based solely on observables. This refinement produces a delicate selection from the Nash equilibrium set that does not discard all weakly dominated behavior. It produces sharp predictions for dominant strategy games \citep{Velez-Brown-2019-SP} and partnership dissolution auctions \citep{Velez-Brown-2018-EI}. It allows one to come to terms with experimental evidence on these games \citep{Brown-Velez-2019-TBB}.

Empirical equilibrium is defined by means of the following thought experiment. Consider a researcher who samples behavior in normal-form games and constructs a theory that explains this behavior. The researcher determines the plausibility of Nash equilibria based on the empirical content of the theory by requiring that Nash equilibria be in its closure. That is, if a Nash equilibrium cannot be approximated to an arbitrary degree by the empirical content of the researcher's theory, it is identified as implausible or unlikely to be observed. Empirical equilibrium is based on the non-parametric theory that each agent chooses actions with higher probability only when they are better for her given what the other agents are doing. More precisely, an empirical equilibrium is a Nash equilibrium that is the limit of a sequence of behavior satisfying \emph{weak payoff monotonicity}, i.e., between two alternative actions for an agent, say $a$ and $b$, if the agent plays $a$ with higher frequency than $b$, it is because given what the other agents are doing, $a$ has higher expected utility than $b$.

One can give  empirical equilibrium a static or dynamic interpretation. First, its definition simply articulates the logical implication of the hypothesis that the researcher's theory is well specified, for this hypothesis is refuted by the observation of a Nash equilibrium that is not an empirical equilibrium. Alternatively, suppose that the researcher hypothesizes that behavior will converge to a Nash equilibrium through an unmodeled evolutionary process that produces a path of behavior that is consistent with her theory. Then, the researcher can also conclude that the only Nash equilibria that will be approximated by behavior are empirical equilibria.

Thus, the essential component in the definition of empirical equilibrium is the closure of the empirical content of weak payoff monotonicity, the theory on which its definition is based. In our main results we study the relationship between this empirical content and that of alternative non-parametric and parametric theories.

Our first result is that empirical equilibrium can be alternatively defined by proximity to full-support \textit{payoff monotone behavior}, i.e., each interior distribution in which probabilities are ordinally equivalent to the expected utility vectors it induces (Theorem~\ref{Thm:EE=APP-interior}). This result is interesting in two dimensions. On the one hand, since payoff monotone behavior is also weakly payoff monotone, this characterization simplifies the computation of the empirical equilibrium set in applications. For instance, this result is crucial for the computation of the empirical equilibrium set in partnership dissolution auctions \citep{Velez-Brown-2018-EI}. On the other hand, it shows a robustness property of empirical equilibria. Even though weak payoff monotonicity is intuitive, it may accept behavior in which actions that have different expected utility are played with the same probability. Thus, if one is suspicious of the plausibility of this type of behavior, our theorem guarantees the approximation of an empirical equilibrium can always be done by means of payoff monotone behavior.

Empirical equilibrium is related with \textit{firm} equilibria, which is defined by means of approximation by exchangeable randomly disturbed payoff models \citep{VanDamme-1991-Springer}; \textit{vanishing control cost approachable} equilibria, which is defined by means of approximation in control cost games \citep{VanDamme-1991-Springer};  and \textit{logistic QRE approachable} equilibria \citep{Mackelvey-Palfrey-1996-JER}. Though not considered before in the literature, we could also define a refinement by approximation by means of behavior in \textit{structural QRE} \citep{mckelvey:95geb}, and \textit{regular QRE} \citep{Mackelvey-Palfrey-1996-JER,Goeree-Holt-Palfrey-2005-EE}, two popular theories for the analysis of data in experiments. Each of these theories generate interior payoff monotone behavior. Theorem~\ref{Thm:EE=APP-interior} alerts us about the possibility that empirical equilibrium coincides with some of these alternative refinements. The remaining results in the paper determine if this is so.

The empirical content of control cost games is contained in that of the regular QRE theory \citep{VanDamme-1991-Springer,Goeree-Holt-Palfrey-2016-Book}. It turns out that empirical equilibrium can be defined based on the empirical content of a finitely dimensional subfamily of control cost games (Lemma~\ref{Lm:cost=mon}). Thus, it can also be defined by means of approximation by regular QRE behavior (Lemma~\ref{Lm:Mon=rQRE}).

Regular QRE also allows us to articulate the idea of increasingly sophisticated behavior. That is, one can identify sequences of regular QRE that \textit{are utility maximizing in the limit}, i.e., that generate sequences of behavior that can accumulate only on mutual best responses. Remarkably, each empirical equilibrium in a game is the limit of behavior generated by a sequence of regular QRE that is utility maximizing in the limit (Theorems~\ref{Thm:EE=R} and~\ref{Th:E=C}).\footnote{It is not clear that vanishing control cost approachable equilibrium coincides with empirical equilibrium. This equilibrium refinement requires control cost functions vanish in a particular parametric form that is more restrictive than the general form used to prove Theorems~\ref{Thm:EE=R} and~\ref{Th:E=C}.}
 Regular QRE has been successful generating key comparative statics in data from diverse strategic environments \citep{Goeree-Holt-Palfrey-2016-Book}.\footnote{Behavior in the empirical implementation of the popular cognitive hierarchy models of \citet{Camerer-et-al-QJE-2004} can be arbitrarily approximated by regular QRE with heterogeneous non-common prior beliefs \citep{ROGERS-et-al-2009-JET}.} Even though it is not universal, estimates of parameters of the most popular form of regular QRE, its logistic form, tend to move toward its infinitely sophisticated extreme in experiments that are repeated multiple rounds \citep{mckelvey:95geb,Goeree-Holt-Palfrey-2016-Book}. Thus, Theorems~\ref{Thm:EE=R} and~\ref{Th:E=C} provide us with a direct connection between the practice in experimental economics, the hypotheses that these researchers usually adopt and test in their analysis, and empirical equilibrium.

In contrast to the universal possibility of approximation of empirical equilibria by regular QRE, there are empirical equilibria that cannot be approximated by the empirical content of logistic QRE, structural QRE, and exchangeable randomly disturbed payoff models (Sec.~\ref{Sec:Harsanyapproach}).\footnote{Thus, one should not be surprised if behavior in an experiment moves toward best responses while logistic QRE parameter estimates do not move towards their sophisticated extreme.} Indeed, if at least an agent has at least three actions available, one can always construct a payoff matrix for which there is an empirical equilibrium that cannot be approximated by the empirical content of a general form of monotone randomly disturbed payoff models introduced by \citet{GOVINDAN-Reny-Robson-2003-GEB} and whose empirical content includes that of these three models (Theorem~\ref{Thm:Paradox2}). This is surprising. Randomly disturbed payoff models are not refutable if their parameters are left unrestricted in the domain in which they are defined \citep{Haile-et-al-2008}. Moreover, even if one also disciplines them with convergence to best response operators, they induce no proper refinement of the Nash equilibrium set (Proposition~\ref{Prop:unrefutability}). However, if they are disciplined by payoff monotonicity, their empirical content is constrained beyond monotonicity itself.

Despite their popularity in experimental research and the central role that randomly disturbed payoff models have in the foundations of Nash equilibrium, very little was known about their empirical content when they are disciplined by monotonicity. In particular, \citet{Goeree-Holt-Palfrey-2005-EE} posed the question whether the empirical content of structural QRE (for a fixed game in which at least an agent has at least three actions available) coincides with that of the regular QRE models. A byproduct of our results is a sharp answer to this question, i.e., that the empirical contents of these theories differ in fundamental ways. On the one hand, regular QRE, in particular its parametric form in control cost games, generate the same empirical content as payoff monotonicity. Thus, the empirical content of regular QRE, for a fixed game, is not restricted by any of the additional assumptions on the unobservables that define the theory (see Sec.~\ref{Sec:rQRE}). On the other hand, monotone randomly disturbed payoff models can be misspecified for the study of behavior that not only is consistent with payoff monotonicity, but is also disciplined by increasing sophistication.

The remainder of the paper is organized as follows. Sec.~\ref{Sec:Model} introduces the model and definitions. Sec.~\ref{Sec:Examples} presents a series of examples that allow the reader to familiarize with empirical equilibrium and show that this refinement is independent from the tremble based refinements previously defined in the literature. Sec.~\ref{Sec:Foundations} presents our results. Sec.~\ref{Sec:Discussion} discusses and concludes.

\section{Model}\label{Sec:Model}

We study the plausibility of Nash equilibria in a finite normal-form game $\Gamma(u):=(N,A,u)$ where $N:=\{1,...,n\}$ is a set of agents; $(A_i)_{i\in N}$ are the corresponding action spaces and $A:= A_1\times\dots\times A_n$ the set of action profiles; and  $u:=(u_i)_{i\in N}$ is the profile of expected utility indices, i.e., functions $u_i:A\rightarrow \R$. Let $\Ucal$ be the set of all utility profiles. Our interpretation of the game is standard. Agents simultaneously choose an action. Given that action profile $a:=(a_i)_{i\in N}$ is chosen, agent~$i$'s payoff is $u_i(a)$. Our analysis will not involve comparisons of behavior across games with different agent sets or action spaces. Thus, $N$ and $A$ are fixed throughout.

A strategy for agent $i$  is a probability distribution on $A_i$, denoted generically by $\sigma_i\in\Delta(A_i)$. A pure strategy places probability one on a given action. We identify pure strategies with the actions themselves. A strategy is interior if it places positive probability on each possible action. A profile of strategies is denoted by $\sigma:=(\sigma_i)_{i\in N}\in\Sigma(A):=\Delta(A_1)\times\dots\times\Delta(A_n)$. Given $S\subseteq N$, we denote a subprofile of strategies for these agents by $\sigma_S$. When $S=N\setminus\{i\}$, we simply write $\sigma_{-i}\in \Sigma(A)_{-i}:=\times_{j\in N\setminus\{i\}}\Delta(A_j)$. Consistently, we concatenate partial strategy profiles as in $(\sigma_{-i},\mu_i)$. We consistently use this convention when operating with vectors, as with action profiles.

Agent $i$'s expected utility given strategy profile $\sigma$ is
\[\exputil_\sigma u_i=\sum_{a\in A}u_i(a)\sigma(a),\]
where $\sigma(a)=\sigma_1(a_1)\dots\sigma_n(a_n)$. Following our convention of identifying pure strategies with actions, we write $\exputil_{(\sigma_{-i},a_i)}u_i$ for the utility that agent $i$ gets from playing actions $a_i$ when the other agents play $\sigma_{-i}$. We say that an action $a_i\in A_i$ is weakly dominated by action $\hat a_i \in A_i$ if for each $a_{-i}\in A_{-i}$, $u_i(a_{-i},\hat a_i)\geq u_i(a)$ with strict inequality for at least an element of $A_{-i}$. We say that $a_i\in A_i$  is a weakly dominated action if there is another action that weakly dominates it.

The following are the basic prediction for game $\Gamma(u)$ and three of its most prominent refinements.
\begin{enumerate}
\item \citep{Nash-1951} A \textit{Nash equilibrium of $\Gamma(u)$} is a profile of strategies $\sigma$, such that for each $i\in N$ and each $\mu_i\in\Delta(A_i)$, $\exputil_\sigma u_i\geq \exputil_{(\sigma_{-i},\mu_i)}u_i$ . We denote this set by $\Nash(\Gamma(u))$.
\item An \textit{undominated Nash equilibrium of $\Gamma(u)$} is a Nash equilibrium of $\Gamma$ in which no agent plays with positive probability a weakly dominated action. We denote this set by $\UNash(\Gamma(u))$.

\item \citep{Selten-1975-IJGT}   A \textit{perfect equilibrium} of $\Gamma$ is a profile of strategies $\sigma$ that is the limit of a sequence of interior strategy profiles $\{\sigma^\lambda\}_{\lambda\in\N}$ such that for each $\lambda\in \N$ and each $i\in N$, $\sigma^\lambda_i$ places probability greater than $1/\lambda$ on a given action only if it is a best response to $\sigma^\lambda_{-i}$. We denote this set by $\TNash(\Gamma(u))$.\footnote{Our definition of perfect equilibrium corresponds to \citet{Myerson-1978-IJGT}'s characterization of \citet{Selten-1975-IJGT}'s perfect equilibrium.}

\item \citep{Myerson-1978-IJGT} A \textit{proper equilibrium} of $\Gamma(u)$  is a profile of strategies $\sigma$ that is the limit of a sequence of interior strategy profiles $\{\sigma^\lambda\}_{\lambda\in\N}$ such that for each $\lambda\in \N$, each $i\in N$, and each pair of actions $\{a_i,\hat a_i\}\subseteq A_i$, if $\exputil_{(\sigma^\lambda_{-i},a_i)}u_i>\exputil_{(\sigma^\lambda_{-i},\hat a_i)}u_i$, then $\sigma^\lambda_i(\hat a_i)\leq (1/\lambda)\sigma^\lambda_i(a_i)$. We denote this set by $\PNash(\Gamma(u))$.
\end{enumerate}

Our main objective is to study empirical equilibrium, a refinement of Nash equilibrium that is based on an empirical characterization of behavior. That is, we envision that the researcher samples empirical distributions of behavior in a finite set of normal-form games.  Based on the analysis of the data, the researcher constructs a refutable theory that explains this behavior. Then, uses this theory to determine the plausibility of Nash equilibria in all normal-form games. If a Nash equilibrium is not in the closure of the empirical content of the researchers theory, the researcher would be able to reject the specification of the theory were he or she to observe this equilibrium. Thus, under the hypothesis that the researcher's theory is well specified, each Nash equilibrium that does not belong to the closure of the empirical content of the researcher's theory is implausible. Empirical equilibrium is the refinement so defined when the researcher endorses the non-parametric theory that each agent chooses actions with higher probability only when they are better for her given what the other agents are doing.

\begin{definition}[\citealp{Velez-Brown-2019-SP}]\rm $\sigma\in\Sigma(A)$ is \textit{weakly payoff monotone for $u$} if for each $i\in N$ and each pair of actions $\{a_i,\hat a_i\}\subseteq A_i$ such that $\sigma_i(a_i)>\sigma_i(\hat a_i)$, we have that $\exputil_{(\sigma_{-i},a_i)}u_i>\exputil_{(\sigma_{-i},\hat a_i)}u_i$.
\end{definition}

Intuitively, a profile of strategies is weakly payoff monotone for a game if differences in behavior reveal differences in expected payoffs.

\begin{definition}[\citealp{Velez-Brown-2019-SP}]\rm An \textit{empirical equilibrium of $\Gamma(u)$} is a Nash equilibrium of $\Gamma(u)$ that is the limit of a  sequence of weakly payoff monotone strategies for $u$. We denote this set by $\ENash(\Gamma(u))$.
\end{definition}

It is well-known that proper equilibria are Nash equilibria. One can easily see that proper equilibria are empirical equilibria. For each finite game the set of Proper equilibria is non-empty \citep{Myerson-1978-IJGT}. Thus, for each finite game the set of empirical equilibria is also non-empty \citep{Velez-Brown-2019-SP}.

The following property of a strategy profile, which implies weak payoff monotonicity, will allow us to provide an alternative useful characterization of empirical equilibrium.

\begin{definition}\rm $\sigma\in\Sigma(A)$ is \textit{payoff monotone for $u$} if for each $i\in N$ and each pair of actions $\{a_i,\hat a_i\}\subseteq A_i$, $\sigma_i(a_i)\geq \sigma_i(\hat a_i)$ if and only if $\exputil_{(\sigma_{-i},a_i)}u_i\geq \exputil_{(\sigma_{-i},\hat a_i)}u_i$.
\end{definition}

\section{Empirical equilibrium and tremble based refinements}\label{Sec:Examples}

In this section we study the relationship between empirical equilibrium and undominated and perfect equilibria. We do so by analyzing a series of examples showing that these equilibrium concepts are independent. The main purpose of this discussion is to provide the reader with clear intuition about empirical equilibrium by contrasting it with these more familiar refinements.\footnote{In Sec.~\ref{Sec:Foundations} we discuss two refinements introduced by \citet{VanDamme-1991-Springer} and \cite{mckelvey:95geb} that are subrefinements of empirical equilibrium. These studies provide examples showing that these refinements may not be contained in the set of undominated equilibria. Thus, one can conclude from the examples in \citet{VanDamme-1991-Springer} and \cite{mckelvey:95geb} that empirical equilibria may involve weakly dominated actions are played with positive probability.}

\begin{table}[t]
  \centering
  \begin{tabular}{ccp{2.4cm}p{2.5cm}}
  &    &\multicolumn{2}{c}{Player 2}
  \\
  &&  \multicolumn{1}{c}{$b_1$}&\multicolumn{1}{c}{$b_2$}
\\\cline{3-4}
&$a_1$&\multicolumn{1}{|c}{$1,1$}&\multicolumn{1}{|c|}{$0,0$}
\\\cline{3-4}
Player 1& $a_2$&\multicolumn{1}{|c}{$0,0$}&\multicolumn{1}{|c|}{$0,0$}
\\\cline{3-4}
\multicolumn{4}{c}{}
\\
\multicolumn{4}{c}{(a) $u^1$}  \end{tabular}
\begin{tabular}{c}
\\
\\
\\
\\
\end{tabular}
\begin{tabular}{ccp{2.4cm}p{2.5cm}}
  &    &\multicolumn{2}{c}{Player 2}
  \\
  &&  \multicolumn{1}{c}{$b_1$}&\multicolumn{1}{c}{$b_2$}
\\\cline{3-4}
&$a_1$&\multicolumn{1}{|c}{$2,2$}&\multicolumn{1}{|c|}{$2,1$}
\\\cline{3-4}
Player 1& $a_2$&\multicolumn{1}{|c}{$2,3$}&\multicolumn{1}{|c|}{$0,0$}
\\\cline{3-4}
\multicolumn{4}{c}{}
\\
\multicolumn{4}{c}{(b) $v^1$ }
  \end{tabular}
    \caption{In two games shown $N=\{1,2\}$, $A_1=\{a_1,a_2\}$, $A_2=\{b_1,b_2\}$, and payoffs are shown in the corresponding table; (a)  a game in which the set of empirical equilibria is a proper subset of the set of Nash equilibria; (b) a game in which there are empirical equilibria in which player 1 chooses a weakly dominated strategy with positive probability.}\label{Tab:Gamma1-and-Psi}
\end{table}

\begin{example}\label{Ex:Gamma_1}\rm
Consider game $\Gamma(u^1)$ in Table~\ref{Tab:Gamma1-and-Psi} (a). This game was proposed by \citet{Myerson-1978-IJGT} to illustrate that some Nash equilibria are intuitively implausible. There are two Nash equilibria in $\Gamma(u^1)$, $(a_1,b_1)$ and $(a_2,b_2)$. Only $(a_1,b_1)$ is an empirical equilibrium in this game. Indeed, for each distribution of actions of player 2, player 1's utility from playing $a_1$ is greater than or equal to the utility from playing $a_2$; thus, in a profile of weakly payoff monotone distributions of play, agent~$1$ will always play~$a_1$ with probability at least~$1/2$ (Fig.~\ref{Table-valuations-2} (a)); thus, $(a_2,b_2)$ cannot be approximated by weakly payoff monotone behavior. If this game is played and agents behavior is weakly payoff monotone and approximates a Nash equilibrium, it is necessarily $(a_1,b_1)$. $\qed$
\end{example}

Each refinement that rules out weakly dominated behavior coincides with empirical equilibrium in game $\Gamma(u^1)$. Undominated equilibria and empirical equilibria are independent, however.

\begin{example}\rm Consider game $\Gamma(v^1)$ in Table~\ref{Tab:Gamma1-and-Psi} (b). Player~$2$ has a strictly dominant strategy in this game. Thus, in each Nash equilibrium~$\sigma$ of~$\Gamma(v^1)$, $\sigma_2(b_1)=1$. Agent~$1$ is indifferent between both actions if agent~$2$ plays~$b_1$.  Thus, the set of Nash equilibria of this game is the distributions in which agent~$1$ randomizes between both actions and agent~$2$ plays~$b_1$. Now, let $\sigma$ be a weakly payoff monotone distribution for $\Gamma(v^1)$. Since $b_1$ strictly dominates $b_2$, $\sigma_2(b_1)\geq \sigma_2(b_2)$. If $\sigma_2(b_2)>0$, $E_{(\sigma_2,a_1)}v^1_1>E_{(\sigma_2,a_2)}v^1_1$. Thus, it must be the case that $\sigma_1(a_1)\geq\sigma_1(a_2)$. If $\sigma_2(b_2)=0$, $E_{(\sigma_2,a_1)}v^1_1=E_{(\sigma_2,a_1)}v^1_1$. Thus, $\sigma_1(a_1)=\sigma_1(a_2)$. Thus, the set of weakly payoff distributions for $\Gamma(v^1)$ are those at which $\sigma_1(a_1)\geq 1/2$ and $\sigma_2(b_1)\geq 1/2$, except those at which $\sigma_2(b_1)=1$ and $\sigma_1(a_1)<1/2$ (Fig.~\ref{Table-valuations-2} (b)). The set of empirical equilibria of $\Gamma(v^1)$ are the Nash equilibria in which agent $1$ plays $a_1$ with probability at least $1/2$. Since~$a_2$ is weakly dominated by~$a_1$ for player~$1$, almost all of these empirical equilibria involve one player playing a weakly dominated action with positive probability. $\qed$
\end{example}

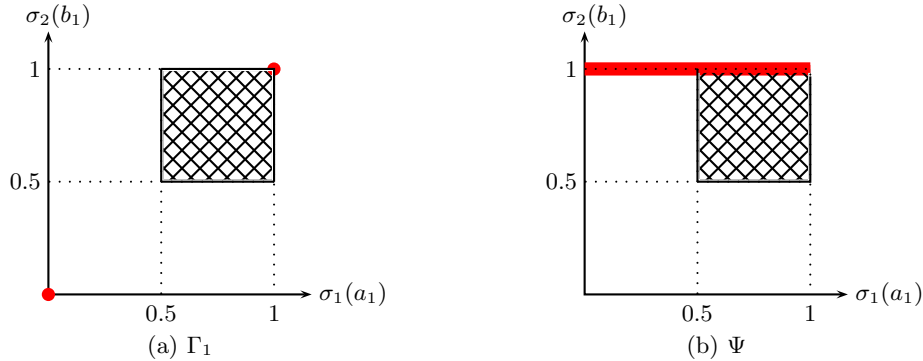
\begin{figure}[t]
\centering
\begin{pspicture}(6,2)(11,7)
\psline{<->}(6.5,6.5)(6.5,3)(10,3)
\psdots[dotsize=5pt,linecolor=red](6.5,3)(9.5,6)
\psframe[fillstyle=crosshatch](8,4.5)(9.5,6)
\psframe[linecolor=white](8,4.5)(9.5,6)
\psline[linestyle=dotted](6.5,6)(9.5,6)(9.5,3)
\psline[linestyle=dotted](8,3)(8,6)
\psline[linestyle=dotted](6.5,4.5)(9.5,4.5)
\psline(8,4.5)(8,6)
\psline(8,4.5)(9.5,4.5)
\psline(8,6)(9.5,6)(9.5,4.5)
\rput[l](10.1,3){$\mbox{\footnotesize $\sigma_1(a_1)$}$}
\rput[l](6.2,6.7){$\mbox{\footnotesize $\sigma_2(b_1)$}$}
\rput[c](8,2.75){$\mbox{\footnotesize $0.5$}$}
\rput[r](6.4,4.5){$\mbox{\footnotesize $0.5$}$}
\rput[c](9.5,2.75){$\mbox{\footnotesize $1$}$}
\rput[r](6.4,6){$\mbox{\footnotesize $1$}$}
\rput[c](8.25,2.3){\footnotesize (a) $\ENash(\Gamma(u^1))$}
\end{pspicture}
\begin{pspicture}(4,2)(11,7)
\psline{<->}(6.5,6.5)(6.5,3)(10,3)
\psline[linecolor=red,linewidth=5pt](6.5,6)(9.5,6)
\psframe[fillstyle=crosshatch](8,4.5)(9.5,6)
\psframe[linecolor=white](8,4.5)(9.5,6)
\psline[linecolor=red,linewidth=3pt](9.5,6)(8,6)
\psline[linestyle=dotted](6.5,6)(9.5,6)(9.5,3)
\psline[linestyle=dotted](8,3)(8,6)
\psline[linestyle=dotted](6.5,4.5)(9.5,4.5)
\psline(8,4.5)(8,6)
\psline(8,4.5)(9.5,4.5)(9.5,5.9)
\rput[l](10.1,3){$\mbox{\footnotesize $\sigma_1(a_1)$}$}
\rput[l](6.2,6.7){$\mbox{\footnotesize $\sigma_2(b_1)$}$}
\rput[c](8,2.75){$\mbox{\footnotesize $0.5$}$}
\rput[r](6.4,4.5){$\mbox{\footnotesize $0.5$}$}
\rput[c](9.5,2.75){$\mbox{\footnotesize $1$}$}
\rput[r](6.4,6){$\mbox{\footnotesize $1$}$}
\rput[c](8.25,2.3){\footnotesize (b) $\ENash(\Gamma(v^1))$}
\end{pspicture}
\caption{(a) Weakly payoff monotone distributions (shaded area) and Nash equilibria  of $\Gamma(u^1)$; $\sigma_1(a_1)$ is the probability with which agent $1$ plays $a_1$. Equilibrium $(a_1,b_1)$ can be approximated by weakly payoff monotone behavior. Thus, it is an empirical equilibrium of $\Gamma(u^1)$. Equilibrium $(a_2,b_2)$ cannot be approximated by weakly payoff monotone behavior. Thus, it is not an empirical equilibrium of $\Gamma(u^1)$. (b) Weakly payoff monotone distributions and Nash equilibria of $\Gamma(v^1)$; each Nash equilibrium in which agent $1$ plays $a_1$ with probability at least $1/2$ is an empirical equilibrium.}\label{Table-valuations-2}
\end{figure}

Empirical equilibrium does a subtle selection from the Nash equilibrium set. It determines the plausibility of a strategy based on its relative merits with respect to the alternative actions that the agent may choose. The following example drives this point home. It illustrates it for a parametric family of games. This family is a generalization of a game proposed by \citet{Myerson-1978-IJGT} to show that it is possible to introduce weakly dominated actions in $\Gamma(u^1)$, and considerably change its set of trembling hand perfect equilibria.

\begin{table}[t]
  \centering
  \begin{tabular}{ccp{2.5cm}p{2.5cm}p{2.5cm}}
  &    &\multicolumn{3}{c}{Player 2}
  \\
  &&  \multicolumn{1}{c}{$b_1$}&\multicolumn{1}{c}{$b_2$}&\multicolumn{1}{c}{$b_3$}
\\\cline{3-5}
&$a_1$&\multicolumn{1}{|c}{$1,1$}&\multicolumn{1}{|c|}{$0,0$}&\multicolumn{1}{c|}{$-7-c_1,-7-c_2$}
\\\cline{3-5}
Player 1& $a_2$&\multicolumn{1}{|c}{$0,0$}&\multicolumn{1}{|c|}{$0,0$}&\multicolumn{1}{c|}{$-7,-7$}
\\\cline{3-5}
& $a_3$&\multicolumn{1}{|c}{$-7-c_1,-7-c_2$}&\multicolumn{1}{|c|}{$-7,-7$}&\multicolumn{1}{c|}{$-7,-7$}
\\\cline{3-5}
  \end{tabular}
    \caption{$N=\{1,2\}$, $A_1=\{a_1,a_2,a_3\}$, $A_2=\{b_1,b_2,b_3\}$, and payoffs $u^c$ given in the table, where $c:=(c_1,c_2)$, $c_1>0$, and $c_2>0$.}\label{Tab:Gamma_c}
\end{table}

\begin{example}\label{Ex:Gamma_c}\rm Consider game $\Gamma(u^c)$ for some $c:=(c_1,c_2)$, $c_1>0$, and $c_2>0$ (Table~\ref{Tab:Gamma_c}). Standard arguments show that for each $c>0$,
  \[\begin{array}{l}\Nash(\Gamma(u^c))=\{(a_1,b_1), (a_2,b_2), (a_3,b_3)\},
  \\\TNash(\Gamma(u^c))=\UNash(\Gamma(u^c))=\{(a_1,b_1), (a_2,b_2)\},\\\PNash(\Gamma(u^c))=\{(a_1,b_1)\}.\end{array}\]
In contrast to these refinements, the empirical equilibrium set of $\Gamma(u^c)$ depends on $c$. First, note that for no $c>0$, $(a_3,b_3)$ is an empirical equilibrium of $\Gamma(u^c)$. This is so because~$a_2$ weakly dominates~$a_3$ for player~$1$. Thus, in any weakly payoff monotone distribution for $\Gamma(u^c)$, player $1$ plays $a_2$ with a probability that is at least the probability with which she plays action $a_3$. Thus, no sequence of weakly payoff monotone distributions for $\Gamma(u^c)$ converges to $(a_3,b_3)$. On the other hand for each $c>0$, $(a_1,b_1)\in\PNash(\Gamma(u^c))\subseteq \ENash(\Gamma(u^c))$.

Let us now examine the plausibility of $(a_2,b_2)$ in $\Gamma(u^c)$. Think of the payoffs in the game as dollar amounts. Consider first a small $c$, say $c_1\approx c_2\approx0.01$. Let $\sigma$ be an empirical distribution of play that approximates $(a_2,b_2)$. In such a situation, $E_{(\sigma_2,a_1)}u^c_1\approx 0>E_{(\sigma_2,a_3)}u^c_1\approx-7$ and $E_{(\sigma_2,b_1)}u^c_2\approx 0>E_{(\sigma_2,b_3)}u^c_2\approx-7$. Thus, if expected utility guides the choices of the players, one can expect that player $1$ will play $a_1$ at least as often as $a_3$, and player $2$ will play $b_1$ at least as often as $b_3$. If this is so, action $a_1$ will have a greater utility than action $a_2$ for player~$1$, and action $b_1$ will have a greater utility than action $b_2$ for player~$2$. Thus, if expected utility guides the choices of the agents, $\sigma$ will not be close to $(a_2,b_2)$. Thus, a plausible empirical distribution, i.e., one that is informed by expected utility for this game, will never be close to $(a_2,b_2)$.

Now, consider a large $c$, say $c_1\approx c_2\approx 100,\!000$. Again, if $\sigma$ is an empirical distribution of play that approximates $(a_2,b_2)$ and is guided by expected utility, player~$1$ will be playing $a_1$ at least as often as $a_3$, and player~$2$ will be playing~$b_1$ at least as often as $b_3$. In contrast with our previous case, it does not follow that necessarily action $a_1$ will have a greater utility than action $a_2$ for player~$1$, and action $b_1$ will have a greater utility than action $b_2$ for player~$2$. This will only happen if player $1$ is playing $a_1$ at least one hundred thousand times as often as $a_3$, and player $2$ is playing $b_1$ at least one hundred thousand times as often as $b_3$. Thus, it is possible that $\sigma$ is informed by expected utility, i.e., $\sigma_1(a_1)>\sigma_1(a_3)$ and $\sigma_2(b_1)>\sigma_2(b_3)$, and at the same time $E_{(\sigma_2,a_2)}u^c_1>E_{(\sigma_2,a_1)}u^c_1$, $E_{(\sigma_1,b_2)}u^c_2>E_{(\sigma_1,b_1)}u^c_2$, $\sigma_1(a_2)\approx1$, and $\sigma_2(b_2)\approx1$. Essentially, since the possible loss for player~$1$ from playing $a_1$ is about 100,000.00, player $1$ can be scared away from playing $a_1$ if player $2$ is playing $b_3$ more than once each 100,000 times she plays $b_1$. This is still compatible with $b_3$ being the worst alternative given what the other agent is doing.

These arguments can be easily formalized to show that
\[\ENash(\Gamma_c)=\left\{\begin{array}{ll}\{(a_1,b_1)\}&\textrm{if }\min\{c_1,c_2\}\leq 1,\\\{(a_1,b_1),(a_2,b_2)\}&\textrm{Otherwise. }\end{array}\right.\]
One cannot expect that if one brings these games to a laboratory setting or has the opportunity to collect field data on them, the threshold $\min\{c_1,c_2\}=1$ will be a good predictor of a structural change in the behavior of the agents. However, it is reasonable that behavior in this game will depend on the size of $c$, as empirical equilibrium predicts, i.e., equilibrium $(a_2,b_2)$ will be relevant only for high values of $c$. Undominated equilibria, perfect equilibria, and proper equilibria all miss this point. Undominated equilibrium and perfect equilibrium miss that when $c$ is too low, actions $a_2$ and $b_2$ are de facto ``weakly dominated'' when they are played with almost certainty. That is, if they were going to be played with probability close to one, actions $a_1$ and $b_1$, would be preferred for the respective players. Thus, we can rule this equilibrium out by means of the following observation. It is not reasonable that we will observe a distribution of play in which an agent is not playing her unique maximizer of utility with high probability, say more than random play.

Finally, proper equilibrium dismisses $(a_2,b_2)$ independently of $c$. Think of our example with high $c$. For $(a_2,b_2)$ to be a proper equilibrium of $\Gamma(u^c)$, for large $\lambda$ there must be a distribution of play $\sigma^\lambda$ satisfying two conditions: (i) $\sigma^\lambda$ is close to $(a_2,b_2)$, and thus $E_{(\sigma^\lambda_2,a_1)}u^c_1\approx 0>E_{(\sigma^\lambda_2,a_3)}u^c_1\approx-7$ and $E_{(\sigma^\lambda_2,b_1)}u^c_2\approx 0>E_{(\sigma^\lambda_2,b_3)}u^c_2\approx-7$; and (ii) $\sigma^\lambda_1(a_1)>\lambda\sigma^\lambda_1(a_3)$ and $\sigma^\lambda_2(b_1)>\lambda\sigma^\lambda_2(b_3)$. For distributions where $\lambda\geq 100,\!000$, $a_2$ and $b_2$ are not maximizing choices for players 1 and 2, respectively, meaning $(a_2,b_2)$ cannot be a proper equilibrium. Thus, the reason why proper equilibrium dismisses $(a_2,b_2)$ for high $c$ is that it uses the same parameter for proximity to $(a_2,b_2)$ and for the agents' reactivity to differences in expected utility. This allows us to draw a stark difference of this refinement and empirical equilibrium. Proper equilibrium is a decision-theoretical, thought experiment in which a utility maximizing agent considers the possibility that another utility maximizing agent makes a mistake. Confronted with this thought, a utility maximizing agent will determine a Nash equilibrium as implausible because it is impossible that agents who are infinitely reactive to expected utility make self-sustaining mistakes arbitrarily close to the equilibrium. By contrast, empirical equilibrium is an exercise performed by an observer based on weak payoff monotonicity, a testable property of behavior. The observer knows that  if this property is satisfied by empirical frequencies, only empirical equilibria can be approximated by data. $\qed$
 \end{example}

\section{Results}\label{Sec:Foundations}

\subsection{Approachability by payoff monotone behavior}\label{Sec:payoffmononoteapproach}

Empirical equilibrium can be equivalently defined by proximity of interior payoff monotone behavior.

\begin{theorem}\label{Thm:EE=APP-interior}\rm $\sigma\in\ENash(\Gamma(u))$ if and only if $\sigma\in \Nash(\Gamma(u))$ and there is a convergent sequence of interior payoff monotone distributions for $u$ whose limit is $\sigma$.
\end{theorem}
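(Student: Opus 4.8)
The plan is to prove the two implications separately, with essentially all of the work in the ``only if'' direction. For the ``if'' direction I would first record the elementary fact that payoff monotonicity implies weak payoff monotonicity: if $\sigma^\lambda_i(a_i)>\sigma^\lambda_i(\hat a_i)$, then the biconditional defining payoff monotonicity gives $U_i(\sigma^\lambda_{-i},a_i)\geq U_i(\sigma^\lambda_{-i},\hat a_i)$ while ruling out $U_i(\sigma^\lambda_{-i},\hat a_i)\geq U_i(\sigma^\lambda_{-i},a_i)$, hence $U_i(\sigma^\lambda_{-i},a_i)> U_i(\sigma^\lambda_{-i},\hat a_i)$. Thus any convergent sequence of interior payoff monotone distributions is in particular a sequence of weakly payoff monotone distributions, and together with $\sigma\in\Nash(\Gamma)$ this exhibits $\sigma$ as an element of $\ENash(\Gamma)$.

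The substance is the converse. Let $\sigma\in\ENash(\Gamma)$ and let $\{\tau^k\}$ be weakly payoff monotone with $\tau^k\to\sigma$. Two defects must be repaired: the $\tau^k$ need not be interior, and weak payoff monotonicity permits an agent to assign equal probability to two actions of unequal expected utility, so $\tau^k$ need not be payoff monotone. The crucial observation is that $\tau^k$ already certifies the orderings I will need. Whenever $\sigma_i(a_i)>\sigma_i(\hat a_i)$, then for all large $k$ we have $\tau^k_i(a_i)>\tau^k_i(\hat a_i)$, so weak payoff monotonicity yields $U_i(\tau^k_{-i},a_i)>U_i(\tau^k_{-i},\hat a_i)$; these are finitely many strict inequalities, valid with some margin $\gamma_k>0$.

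For each $k$ I would then construct an interior payoff monotone $\rho^k$ as a small perturbation of $\tau^k$. The idea is to reassign each agent's probabilities so that they become strictly comonotone with the induced expected utilities $U_i(\rho^k_{-i},\cdot)$: break ties among equal-probability but unequal-utility actions in the direction dictated by utility, and lift the minimal-utility actions (which in a payoff monotone profile must share the lowest probability) to a common small positive value to secure interiority. Since the reassigned probabilities depend on the very expected utilities they induce, $\rho^k$ cannot be written down explicitly; instead I would realize it as a fixed point of a continuous self-map of a small compact convex neighborhood of $\tau^k$ (Brouwer), designed so that at a fixed point the probability order within each expected-utility class matches the utility order while the lifted actions retain the smallest probabilities. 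Keeping the total perturbation of size $o(\gamma_k)$ guarantees that the strict inequalities certified above survive the passage from $\tau^k$ to $\rho^k$, so $\rho^k$ is payoff monotone; and $\|\rho^k-\sigma\|\leq\|\rho^k-\tau^k\|+\|\tau^k-\sigma\|\to 0$.

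The main obstacle is exactly this self-consistency: the probability ranking imposed to secure monotonicity must be vindicated by the expected utilities it generates, while the relevant utility gaps among an agent's best responses vanish as $k\to\infty$, since at $\sigma$ all actions in the support tie. The argument therefore lives on a tightrope, in that the repair must be small relative to the shrinking margins $\gamma_k$, and a fixed-point theorem is what closes the circular dependence between chosen probabilities and induced utilities. Secondary care is needed to handle exact ties and non-support best responses, and to verify that the lifting step that produces interiority does not itself reverse any utility comparison.
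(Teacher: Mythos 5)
Your architecture coincides with the paper's: the ``if'' direction is the trivial observation that payoff monotonicity implies weak payoff monotonicity, and the ``only if'' direction reduces to showing that every weakly payoff monotone distribution $\mu$ admits an interior payoff monotone distribution within any $\varepsilon$ of it, the circular dependence between assigned probabilities and induced expected utilities being closed by a Brouwer fixed point, after which a triangle-inequality/diagonal argument finishes. The gap is that your central object---the continuous self-map ``designed so that at a fixed point the probability order within each expected-utility class matches the utility order while the lifted actions retain the smallest probabilities''---is postulated, not constructed, and constructing it is essentially the entire content of the theorem. In particular, payoff monotonicity is a biconditional: at the fixed point, two actions must receive \emph{exactly} equal probability precisely when their induced expected utilities are exactly equal. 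So a hand-built ``break ties and lift'' rule must produce exact probability ties whenever utilities tie at the (unknown) fixed point, must order the lifted zero-probability actions by their (unknown) utilities rather than send them to a common value, must be continuous, and must map your chosen compact convex neighborhood into itself. None of this is verified, and it is precisely where the difficulty you correctly flag actually gets resolved.

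The paper resolves all of it with one explicit map: for fixed $\lambda$ and $\zeta\in(0,1)$, it sets $f^\zeta_i(\beta)\equiv(1-\zeta)\mu_i+\zeta\, l^\lambda_i\bigl((U_i(\beta_{-i},a_i))_{a_i\in A_i}\bigr)$, where $l^\lambda_i$ is the logistic quantal response function. This is a continuous self-map of the whole polytope of strategy profiles (so no neighborhood-invariance issue arises), every fixed point $\gamma^\zeta$ is automatically interior and satisfies $\|\gamma^\zeta-\mu\|=O(\zeta)$, and payoff monotonicity of $\gamma^\zeta$ for small $\zeta$ splits into exactly two cases: if $\mu_i(a_i)=\mu_i(\hat a_i)$, the comparison of $\gamma^\zeta_i(a_i)$ and $\gamma^\zeta_i(\hat a_i)$ is decided entirely by the logit term, which is ordinally equivalent to the induced utilities and ties exactly when they tie; if $\mu_i(a_i)>\mu_i(\hat a_i)$, weak payoff monotonicity of $\mu$ gives a strict utility gap at $\mu$, and continuity preserves both strict inequalities once $\zeta$ is small (finiteness of the game lets one take a single threshold for all pairs and players). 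Your outline is the right plan, but to be a proof it needs this, or an equivalent, explicit construction together with the verification just described.
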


The characterization of the set of empirical equilibria by means of approximation of interior payoff monotone distributions simplifies its computation. In applications, the computation of this set usually requires two steps. First, one needs to identify conditions satisfied by each possible empirical equilibrium. Then, one needs to show that each equilibrium satisfying these properties is an empirical equilibrium. It is in the first step of this process that Theorem~\ref{Thm:EE=APP-interior} proves to be essential. One can identify all the implications of proximity to weakly payoff monotone behavior by assuming only proximity to interior payoff monotone behavior. The gain can be considerable in applications, for this avoids the analysis of corner cases \citep[e.g.][]{Velez-Brown-2018-EI}.

It is worth noting that computing the set of empirical equilibria of a game usually also involves a non-trivial second step in which one proves that the  candidates one identified as possible empirical equilibria are actually so. At this point it is more convenient to construct a sequence of weakly payoff monotone behavior that converges to the candidate, that is actually neither interior, nor payoff monotone \citep[e.g.][]{Velez-Brown-2018-EI}. In this sense, if one were to equivalently define empirical equilibrium based on the empirical content of payoff monotonicity, Theorem~\ref{Thm:EE=APP-interior} would still be a valuable tool in its characterization in applications.

Theorem~\ref{Thm:EE=APP-interior} indicates a form of stability of empirical equilibria. Think for instance of an equilibrium in a game that is itself a non-interior weakly payoff monotone distribution, e.g., a Nash equilibrium in which each agent plays her unique best response.\footnote{These equilibria are usually referred to as \textit{strict} \citep[c.f.,][]{Harsanyi-1973-IJGT}.} One will always conclude that the equilibrium is an empirical equilibrium by taking the respective constant sequence. Theorem~\ref{Thm:EE=APP-interior} implies that this is not the only sequence that will sustain the argument. One will always be able to find a sequence of interior payoff monotone distribution that converges to the equilibrium.

\subsection{Approachability by regular QRE}\label{Sec:rQRE}

Empirical equilibrium can be equivalently defined by approximation by the empirical content of \textit{regular Quantal Response Equilibria} \citep{Mackelvey-Palfrey-1996-JER,Goeree-Holt-Palfrey-2005-EE}.  This theory assumes that agents are noisy best responders, whose actions are responses to the vectors of expected utility. Formally, the model is parameterized by a \textit{quantal response function} (QRF),  i.e.,  for agent $i$   a function $p_i:\R^{A_i}\rightarrow \Delta(A_i)$. For each $a_i\in A_i$ and each $x\in\R^{A_i}$, $p_{ia_i}(x)$ denotes the value assigned to $a_i$ by $p_i(x)$. A QRF $p_i$ is \textit{regular} if it satisfies the following four properties \citep{Goeree-Holt-Palfrey-2005-EE}:

\medskip
\noindent- \textit{Interiority}: $p_i>0$.

\noindent- \textit{Continuity}: $p_i$ is a continuous function.

\noindent- \textit{Responsiveness}: for $x\in \R^{A_i}$, $\eta>0$, and $a_i\in A_i$, $p_{ia_i}(x+\eta1_{a_i})>p_{ia_i}(x)$.\footnote{$1_{a_i}$ denotes the vector in $\R^{A_i}$ that has $1$ in component $a_i$ and $0$ otherwise.}

\noindent- \textit{Monotonicity}: for $x\in \R^{A_i}$ and $\{a_i,\hat a_i\}\subseteq A_i$ such that $x_{a_i}>x_{\hat a_i}$, $p_{ia_i}(x)>p_{i\hat a_i}(x)$.

\medskip
A \textit{quantal response equilibrium} (QRE) of $\Gamma(u)$ with respect to a profile of QRFs, $p:=(p_i)_{i\in N}$, is a fixed point of the composition of $p$ and the expected payoff operator in $\Gamma(u)$ \citep{Goeree-Holt-Palfrey-2005-EE}, i.e., a profile of distributions $(\sigma_i)_{i\in N}$ such that for each $i\in N$, $\sigma_i=p_i((\exputil_{(\sigma_{-i},a_i)}u_i)_{a_i\in A_i})$. When $p$ is regular, we refere to the corresponding QRE also as regular.

Because regular QRFs are interior, monotone, and continuous, each regular QRE of $\Gamma(u)$ is an interior payoff monotone distribution for $\Gamma(u)$. The converse also holds. That is, regular QRE can be seen as a basis for interior payoff monotone behavior.\footnote{In a paper that circulated simultaneously with the first version of our paper, \citet{Goeree-et-al-2018} study the empirical restrictions imposed by a ranking based form of payoff monotonicity that is satisfied by each Nash equilibrium. Thus, this analysis does not produce a refinement of Nash equilibrium, which is our fundamental contribution. At a technical level, our results overlap with theirs only in that they prove a result equivalent to Lemma~\ref{Lm:Mon=rQRE}. They do not address the approximation of Nash equilibria by means of increasingly sophisticated regular QRE, nor by means of a finite dimensional form of regular QRE.}

\begin{lemma}\label{Lm:Mon=rQRE}\rm Let $\sigma$ be an interior payoff monotone distribution for $\Gamma(u)$. Then there is a regular QRF, $p$, for which $\sigma$ is a QRE for $\Gamma(u)$ with respect to $p$.
\end{lemma}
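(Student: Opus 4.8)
The plan is to construct, separately for each agent, a regular quantal response function of the separable ``common-weight'' form, using payoff monotonicity to guarantee that a single increasing weight function can serve all actions at once. Fix $i\in N$ and write $v_{a}\equiv U_i(\sigma_{-i},a)$ for the expected-utility vector induced by $\sigma_{-i}$, and $s_a\equiv\sigma_i(a)$ for the target probabilities. Since $\sigma$ is interior, $s_a>0$ for each $a\in A_i$, and $\sum_{a\in A_i}s_a=1$. Payoff monotonicity says precisely that $s_a\geq s_b$ if and only if $v_a\geq v_b$; in particular $v_a=v_b$ forces $s_a=s_b$, so the assignment $v_a\mapsto s_a$ is a well-defined, strictly increasing map on the finite set of distinct values $\{v_a:a\in A_i\}$.

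The key step is to produce a continuous, strictly increasing function $f:\R\to(0,\infty)$ with $f(v_a)=s_a$ for every $a\in A_i$, and then to set
\[p_{ia}(x)=\frac{f(x_a)}{\sum_{b\in A_i}f(x_b)}\qquad(x\in\R^{A_i}),\]
which is manifestly a probability vector. Such an $f$ exists because, by the previous paragraph, the finitely many data points $(v_a,s_a)$ are mutually consistent (equal abscissae carry equal ordinates) and strictly increasing; one orders the distinct utility values $t_1<\dots<t_m$ with their positive ordinates $w_1<\dots<w_m$, interpolates piecewise linearly on $[t_1,t_m]$, and extends monotonically and positively to the tails (e.g.\ $w_1e^{x-t_1}$ on $(-\infty,t_1)$ and $w_m+(x-t_m)$ on $(t_m,\infty)$). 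Because each target ordinate is the common value shared by the actions at that utility level, there is no obstruction to a single $f$ serving all actions simultaneously.

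It remains to verify that $p_i$ is a regular QRF and that $\sigma$ is the asserted fixed point. Interiority and continuity are immediate from $f>0$, continuity of $f$, and positivity of the denominator. For monotonicity, $x_a>x_b$ gives $f(x_a)>f(x_b)$ since $f$ is strictly increasing, hence $p_{ia}(x)>p_{ib}(x)$. For responsiveness, fix all coordinates but $x_a$ and set $C\equiv\sum_{b\neq a}f(x_b)>0$; then $p_{ia}(x)=f(x_a)/(f(x_a)+C)$, and since $t\mapsto t/(t+C)$ is strictly increasing on $(0,\infty)$ while $f(x_a+\eta)>f(x_a)$ for $\eta>0$, the value strictly increases. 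Finally, evaluating at $v$ gives $p_{ia}(v)=f(v_a)/\sum_{b}f(v_b)=s_a/\sum_b s_b=s_a=\sigma_i(a)$, so $\sigma_i=p_i\big((U_i(\sigma_{-i},a))_{a\in A_i}\big)$; performing this for each $i$ yields the profile $p\equiv(p_i)_{i\in N}$ of regular QRFs with respect to which $\sigma$ is a QRE.

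The one genuine subtlety—and the sole place payoff monotonicity enters—is the existence of a \emph{common} weight function $f$. A naive action-dependent reweighting such as $f_a(x_a)\propto s_a\,e^{\beta x_a}$ would hit the target but destroy global monotonicity; indeed, requiring the monotonicity axiom to hold at every $x$ forces any separable form to use a single common $f$. Once one observes that payoff monotonicity makes the target values order-compatible across actions, the common-$f$ construction dispatches all four regularity axioms and the fixed-point condition at a stroke, and the remaining interpolation is entirely routine.
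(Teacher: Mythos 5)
Your proof is correct, and it takes a genuinely different route from the paper. You construct, for each agent, a Luce-style quantal response function $p_{ia}(x)=f(x_a)/\sum_b f(x_b)$ with a single continuous, strictly increasing, positive weight function $f$ interpolating the data points $(U_i(\sigma_{-i},a),\sigma_i(a))$; payoff monotonicity is exactly what makes these points order-consistent (equal utilities force equal probabilities, higher utilities force strictly higher probabilities), and the four regularity axioms plus the fixed-point property then follow by direct inspection. The paper instead obtains the lemma as a corollary of its Lemma~\ref{Lm:cost=mon}: it builds a profile of control cost functions (second-order splines calibrated by $\sigma$, stitched to an asymptote at zero) whose first-order conditions reproduce $\sigma$ as a Nash equilibrium of the associated control cost game, and then notes that the best-response operator of a control cost game is a regular QRF. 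Your argument is more elementary and self-contained for this lemma alone; what the paper's heavier detour buys is reusability: the same spline construction, with its explicit slope parameters and calibration points, is what drives Theorem~\ref{Th:E=C} (vanishing control costs, hence QRFs that are utility maximizing in the limit) and thereby Theorem~\ref{Thm:EE=R}, whereas your common-$f$ construction would need additional work to produce a sequence of QRFs with that limiting property. Incidentally, your construction is essentially the one the paper attributes to \citet{Goeree-et-al-2018} in its discussion of overlapping results.
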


Theorem~\ref{Thm:EE=APP-interior} and Lemma~\ref{Lm:Mon=rQRE} together imply that empirical equilibrium can be equivalently defined by proximity to the empirical content of the regular QRE theory. This has two significant consequences. First, we are able to connect the empirical equilibrium refinement with the empirical content of a model that is a staple of the analysis of experimental data. Second, it opens the possibility that we provide a foundation of empirical equilibrium in terms of approximation by behavior associated with agents who are best responders in the limit. To make this precise we first need to identify the conditions under which this is so for a sequence of QRFs.

\begin{definition}\rm
A sequence of regular QRFs, $\{p^\lambda\}_{\lambda\in \N}$ is \textit{utility maximizing in the limit} if for each $u\in \Ucal$ and each convergent sequence of QREs of $\Gamma(u)$ corresponding to a  subsequence of $\{p^\lambda\}_{\lambda\in \N}$, its limit is a Nash equilibrium of $\Gamma(u)$.
\end{definition}

We can then define a refinement of Nash equilibrium in the same spirit as empirical equilibrium, but taking as basis for plausibility of behavior regular QRE for increasingly sophisticated regular QRFs.

\begin{definition}\label{Def:approachrQRE}\rm $\sigma\in \Sigma(A)$ is \textit{approachable by regular QRE that are utility maximizing in the limit in $\Gamma(u)$} if there is a sequence of regular QRF profiles, $\{p^\lambda\}_{\lambda\in\N}$, which is utility maximizing in the limit, and a corresponding convergent sequence of QREs for $\Gamma(u)$, whose limit is $\sigma$. We denote this set by $\RNash(\Gamma(u))$.
\end{definition}

Clearly, for each $u\in\Ucal$, $\RNash(\Gamma(u))\subseteq \Nash(\Gamma(u))$. The interpretation of this refinement is similar to that of empirical equilibrium. It differs in that it explicitly models approximation of Nash equilibria by behavior of agents who are infinitely sophisticated in the limit. One can argue that empirical equilibrium is a more cautious refinement, for it is based only on observables, however. Indeed, QRFs are not observables, and some of their properties, as continuity, are not refutable with finite data.

Fortunately, we do not have to choose between these notions of plausibility of behavior in games.

\begin{theorem}\label{Thm:EE=R}\rm For each $u\in\Ucal$, $\ENash(\Gamma)=\RNash(\Gamma)$.
\end{theorem}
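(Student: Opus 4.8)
The plan is to prove the two inclusions separately, using Theorem~\ref{Thm:EE=APP-interior} and Lemma~\ref{Lm:Mon=rQRE} to move freely between empirical equilibria, interior payoff monotone distributions, and regular QRE.

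For $\RNash(\Gamma)\subseteq\ENash(\Gamma)$, let $\sigma\in\RNash(\Gamma)$, witnessed by a family $\{p^\lambda\}$ that is utility maximizing in the limit and a convergent sequence of associated QRE $\{\sigma^\lambda\}$ with limit $\sigma$. By the very definition of utility maximizing in the limit, $\sigma\in\Nash(\Gamma)$. Each $\sigma^\lambda$, being a regular QRE, is interior payoff monotone and hence weakly payoff monotone for $\Gamma$, so $\sigma$ is a limit of weakly payoff monotone distributions and therefore $\sigma\in\ENash(\Gamma)$. This direction is essentially immediate.

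The substantive direction is $\ENash(\Gamma)\subseteq\RNash(\Gamma)$. Given $\sigma\in\ENash(\Gamma)$, Theorem~\ref{Thm:EE=APP-interior} furnishes a convergent sequence of interior payoff monotone distributions $\{\sigma^\lambda\}$ with limit $\sigma\in\Nash(\Gamma)$, and Lemma~\ref{Lm:Mon=rQRE} produces, for each $\lambda$, a regular QRF profile rationalizing $\sigma^\lambda$. What must be arranged is that the family can be taken utility maximizing in the limit. I would isolate and verify the following sufficient condition: a family $\{p^\lambda\}$ is utility maximizing in the limit whenever it \emph{sharpens uniformly}, i.e., for each $i\in N$ and each $\epsilon>0$, $\sup\{p^\lambda_{i a_i}(x): x\in\R^{A_i},\ x_{a_i}\le \max_{b}x_{b}-\epsilon\}\to 0$ as $\lambda\to\infty$. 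Indeed, for any convergent sequence of QRE $\tau^\lambda\to\tau$, continuity of the expected payoff operator gives $U_i(\tau^\lambda_{-i},\cdot)\to U_i(\tau_{-i},\cdot)$; if some $\hat a_i$ were played with positive probability in $\tau$ while being strictly suboptimal against $\tau_{-i}$, then taking $\epsilon$ to be half the payoff gap, uniform sharpening forces $\tau^\lambda_i(\hat a_i)=p^\lambda_{i\hat a_i}(U_i(\tau^\lambda_{-i},\cdot))\to 0$, a contradiction; hence $\tau\in\Nash(\Gamma)$.

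It then remains to construct, for the prescribed sequence $\{\sigma^\lambda\}$, rationalizing regular QRFs that sharpen uniformly. I would build each $p^\lambda_i$ in separable form, $p^\lambda_{i a_i}(x)=F^\lambda_i(x_{a_i})/\sum_{b}F^\lambda_i(x_{b})$, for a single strictly increasing continuous link $F^\lambda_i>0$; such a map is automatically regular, and it realizes $\sigma^\lambda_i$ at the payoff vector $U_i(\sigma^\lambda_{-i},\cdot)$ exactly when $F^\lambda_i$ interpolates the points $\big(U_i(\sigma^\lambda_{-i},a_i),\sigma^\lambda_i(a_i)\big)_{a_i\in A_i}$, which are co-monotone by payoff monotonicity so that a strictly increasing interpolant exists. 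Uniform sharpening of this form amounts to $\log F^\lambda_i$ increasing without bound across every payoff gap of fixed size. The main obstacle, and the step I expect to require the most care, is that monotonicity forbids any fixed per-action asymmetry: continuity and monotonicity force $p_{ia_i}(x)=p_{i\hat a_i}(x)$ whenever $x_{a_i}=x_{\hat a_i}$, so the asymmetry needed to hit an arbitrary interior target must be carried by the payoff gaps themselves, and a uniformly steep link cannot simultaneously pass through two anchors whose payoffs are separated but whose target probabilities have a bounded ratio. I would remove this obstruction by first replacing $\{\sigma^\lambda\}$ with a nearby interior payoff monotone sequence converging to the same $\sigma$ in which off-support probabilities vanish fast, namely, for each $i$ and each pair $a_i,\hat a_i$ with $U_i(\sigma_{-i},a_i)>U_i(\sigma_{-i},\hat a_i)$, arranging $\sigma^\lambda_i(\hat a_i)/\sigma^\lambda_i(a_i)\to 0$. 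Since best responses to the limit cluster at the top payoff level (their mutual gaps vanish) while every strictly suboptimal action carries vanishing probability, this rate condition makes $\log F^\lambda_i$ blow up across each fixed payoff gap, delivering uniform sharpening and hence a utility-maximizing-in-the-limit family with $\sigma$ as the limit of its associated QRE. This places $\sigma\in\RNash(\Gamma)$ and closes the argument.
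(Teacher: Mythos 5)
Your easy direction ($\RNash(\Gamma)\subseteq\ENash(\Gamma)$) is correct and matches the paper, and your sufficient condition for a family of QRFs to be utility maximizing in the limit (uniform sharpening) is sound. The gap is the final step: you assert, without construction, that $\{\sigma^\lambda\}$ can be replaced by a nearby interior payoff monotone sequence converging to the same $\sigma$ in which $\sigma^\lambda_i(\hat a_i)/\sigma^\lambda_i(a_i)\to 0$ across every strict limit payoff gap. This replacement is impossible in general, because payoff monotonicity of the \emph{other} players can pin such a ratio away from zero. Concretely, let player 1 choose among $\{a^*,a,\hat a\}$ and player 2 between $\{b,b'\}$, with $u_1(a^*,\cdot)=1$, $u_1(a,b)=1-t/3$, $u_1(a,b')=1+2t/3$ for a small $t>0$, $u_1(\hat a,\cdot)=0$, and $u_2(a^*,b)=u_2(a^*,b')=1$, $u_2(a,b)=0$, $u_2(a,b')=1/2$, $u_2(\hat a,b)=1$, $u_2(\hat a,b')=0$. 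The profile $\sigma=(a^*,\tfrac23 b+\tfrac13 b')$ is a Nash equilibrium, and it is an empirical equilibrium: $\tau^\lambda_1=(1-\tfrac{7}{4\lambda},\tfrac{1}{\lambda},\tfrac{3}{4\lambda})$, $\tau^\lambda_2=(\tfrac23+\tfrac1\lambda,\tfrac13-\tfrac1\lambda)$ is interior, payoff monotone, and converges to $\sigma$. But $U_2(\tau_1,b)-U_2(\tau_1,b')=\tau_1(\hat a)-\tau_1(a)/2$, and any sequence converging to $\sigma$ eventually has $\tau_2(b)>\tau_2(b')$, so (even weak) payoff monotonicity forces $\tau_1(\hat a)/\tau_1(a)>1/2$ along \emph{every} approximating sequence, while $U_1(\sigma_2,a)-U_1(\sigma_2,\hat a)=1$. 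Hence your rate condition is unachievable, and your route then collapses: any increasing link $F^\lambda_1$ interpolating the anchors must satisfy $F^\lambda_1\bigl(U_1(\tau^\lambda_2,a)\bigr)/F^\lambda_1(0)<2$ across a payoff gap of essentially $1$, so for instance the payoff vector $x$ with $x_{\hat a}=x_{a^*}=0$, $x_a=U_1(\tau^\lambda_2,a)$ gets $p^\lambda_{1\hat a}(x)\geq 1/4$, and uniform sharpening fails for every $\varepsilon\leq 1$.

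This is not a presentational issue; it is the crux of the theorem, and it is exactly why the paper does not argue through payoff-separable quantal responses. The paper derives Theorem~\ref{Thm:EE=R} as a corollary of Theorem~\ref{Th:E=C}: it rationalizes the \emph{given} sequence $\{\sigma^\lambda\}$ (after passing to a subsequence) by spline control cost functions, whose best responses are regular QRFs characterized by the first order conditions $U_i(\sigma_{-i},a_l)-U_i(\sigma_{-i},a_k)=f_i'(\sigma_i(a_l))-f_i'(\sigma_i(a_k))$, and then invokes the vanishing-control-cost lemma for utility maximization in the limit. In that construction the asymmetry needed to sustain a bounded probability ratio across a unit payoff gap is carried by the slope of $f_i$ near probability zero (in the example above both $\tau_1(a)$ and $\tau_1(\hat a)$ vanish), which is compatible with $f^\lambda_i\to 0$ pointwise on $(0,1]$; in your Luce-form construction the asymmetry must live in payoff space, where it cannot coexist with uniform sharpening. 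The obstruction you ran into is essentially the separability problem the paper attributes to structural QRE and Harsanyi perturbations \citep{Velez-Brown-2019-Harsanyi}, so repairing your proof requires abandoning the separable link and, in effect, reproducing something like the paper's control cost argument.
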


Theorem~\ref{Thm:EE=R} allows us to alternatively describe empirical equilibrium in a way that speaks closely to the practice in experimental economics. If one expects that as players gain experience their behavior will be fit by increasingly sophisticated regular QRE, the only Nash equilibria that can be approximated by data are the empirical equilibria.

\subsection{Approachability by additive randomly disturbed payoffs models}\label{Sec:Harsanyapproach}

Two predecessors of empirical equilibrium are the Nash equilibria that are limits of behavior in \citet{Harsanyi-1973-IJGT}'s randomly disturbed payoff models for exchangeable perturbations \citep{VanDamme-1991-Springer} and logistic QRE approachable equilibria \citep{Mackelvey-Palfrey-1996-JER}. These are subrefinements of empirical equilibrium. In what follows we show that these belong to a general family of refinements of Nash equilibrium that are strict subrefinements of empirical equilibrium. As a by product, which we discuss in Sect.~\ref{Sec:Discussion}, we advance our understanding of the empirical content of monotone randomly disturbed payoff models, a topic that has received attention due to the popularity of these models for the analysis of data from economics experiments \citep{Goeree-Holt-Palfrey-2005-EE,Haile-et-al-2008,GOLMAN-2011-JET}.

We follow \citet{GOVINDAN-Reny-Robson-2003-GEB}'s construction of additive randomly disturbed payoff models. It subsumes \citet{Harsanyi-1973-IJGT}'s and all additive randomly disturbed payoff models that guarantee agents' behavior is uniquely determined by utility maximization for almost all realization of payoffs. These include the popular structural QRE of \citet{mckelvey:95geb}.

Given~$\Gamma(u)=(N,A,u)$ and a vector of independent Borel probability measures on~$\R^A$, $(\mu_i)_{i\in N}$, let $(\Gamma(u),\mu)$ be the incomplete information game  in which $\mu:= \mu_1\times\cdots\times\mu_n$ is a common prior on payoff types. Given type $\eta_i\in \R^A$ for agent $i$, her expected utility index is $a\in A\mapsto u_i(a)+\eta_i(a)$. Whenever convenient, given an agent, say $i$, whose action set is $A_i:=\{a_1,...,a_K\}$, we write a vector $x\in\R^A$ as $(x_{a_l})_{a_l\in A_{i}}$ where $x_{a_l}:=(x_{(a_{-i},a_l)})_{a_{-i}}\in A_{-i}$. The interpretation of these perturbations is that either the observer who models the strategic situation by means of game $\Gamma(u)$ does not observe the real payoffs perceived by the agents \citep{Harsanyi-1973-IJGT}, or that the agent fails to perfectly recognize the difference of payoffs between the actions and correctly maximize \citep{mckelvey:95geb}. We require throughout that:

\begin{definition}[\citealp{GOVINDAN-Reny-Robson-2003-GEB}]\rm For each $i\in N$, $\mu_i$ is \textit{purifying}, i.e., for each pair of different actions $\{a_k,a_l\}\subseteq A_i$, and each $\sigma_{-i}\in \Sigma(A)_{-i}$, $\mu_i$ assigns probability zero to the event $\eta_i(a_k,\cdot)-\eta_i(a_l,\cdot)\in\R^{A_{-i}}$ lies on any single prespecified hyperplane in $\R^{A_{-i}}$ with normal $\sigma_{-i}$. Let $\Bcal_i$ be the space purifying Borel probability measures on $\R^A$ and $\Bcal=\Bcal_1\times\dots\times\Bcal_n$.
\end{definition}

The main two models that fit into our framework are \citet{Harsanyi-1973-IJGT}'s, in which each $\mu_i$ is absolutely continuous with respect to the Lebesgue measure on $\R^A$; and \citet{mckelvey:95geb}'s structural QRE, in which perturbations are perfectly correlated across action profiles with the same action for an agent, so they can be identified with measures on $\R^{A_i}$, that are further assumed to be absolutely continuous with respect to the Lebesgue measure on $\R^{A_i}$.

Consider $i\in N$. The assumption of purifying perturbations implies that for each $\sigma_{-i}\in\Sigma(A)_{-i}$ and for $\mu_i$ almost every realization of the perturbation $\eta_i$, agent $i$ with type $\eta_i$ has a unique best response to $\sigma_{-i}$. Thus, given $\mu_i$ and $\sigma_{-i}$, the probability with which agent $i$ is observed playing a given action is uniquely defined by utility maximization. Let \[B^{u_i,\mu_i}_i(\sigma_{-i}):= (B^{u_i,\mu_i}_{ia_i}(\sigma_{-i}))_{a_i\in A_i}\in \Delta(A_i),\] be this distribution. It is easy to see that this function is continuous \citep{GOVINDAN-Reny-Robson-2003-GEB}, and that the fixed points of $\sigma\in\Sigma(A)\mapsto(B^{u_i,\mu_i}_i(\sigma_{-i}))_{i\in N}\in \Sigma(A)$, which exist by Brouwer's fixed point theorem, are the set of observable strategies in Bayesian Nash equilibria of $(\Gamma,\mu)$. We denote the set of profiles $\sigma\in\Sigma(A)$ that are induced by some  Bayesian Nash equilibrium of $(\Gamma(u),\mu)$ by $\BNE(\Gamma(u),\mu))$.

In \citet{mckelvey:95geb}'s model the best response operator can be further characterized by the function $x\in\R^{A_i}\mapsto Q_i^{\mu_i}(x)$, where
\[\sigma_{-i}\in \Sigma(A)_{-i}\mapsto B_i^{u_i,\mu_i}(\sigma_{-i})=Q_i^{\mu_i}(\exputil_{(\sigma_{-i},\cdot)}u_i)\in \Delta(A_{i}).\]
The function $Q^{\mu_i}_i$ which only depends on $N$, $A$, and $\mu_i$, and does not depend on $u$, is referred to as a \textit{structural Quantal Response Function} (sQRF) \citep{mckelvey:95geb,Goeree-Holt-Palfrey-2005-EE}. These functions are regular QRFs as defined in Sec.~\ref{Sec:rQRE} \citep{Goeree-Holt-Palfrey-2005-EE}. The sQRF most commonly used in empirical analysis of experimental data is the logistic form, $l^\lambda$, which is associated with the so-called double-exponential i.i.d perturbation \citep{Goeree-et-al-2018}, and assigns to each $a_i\in A_i$ and each $x\in\R^{A_i}$ the value,
\begin{equation}l^\lambda_{ia_i}(x):=\frac{e^{\lambda x_{a_i}}}{\sum_{\hat a_i\in A_i}e^{\lambda x_{\hat a_i}}}.\label{Equation-Logistic-QRE}\end{equation}

Additive randomly disturbed payoff models also allow us to articulate the idea that agents behavior approximates that of utility maximizers.  More precisely, consider a sequence of purifying perturbations $\{\mu^\lambda\}_{\lambda\in \N}$. We say that this sequence \textit{vanishes} when for each $i\in N$ and each neighborhood of zero, $G$, as $\lambda\rightarrow\infty$, $\mu_i^\lambda(G)\rightarrow1$. If there is a convergent sequence of Bayesian Nash equilibria of the respective games $(\Gamma(u),\mu^\lambda)$ for a sequence of vanishing perturbations, the limit of the corresponding induced observable strategies is necessarily a Nash equilibrium of $\Gamma(u)$ \citep[c.f.,][]{VanDamme-1991-Springer}.  Thus, any sequence of sQRFs whose corresponding perturbations are vanishing is utility maximizing in the limit.

If additive randomly disturbed payoff models are unrestricted, each possible observable distribution of behavior in a normal-form game is in the empirical content of this theory. More precisely, for each $\sigma\in\Sigma(A)$ and each $u\in\Ucal$, there is a purifying perturbation $\mu$ such that $\sigma\in \BNE(\Gamma(u),\mu)$ \citep{Haile-et-al-2008}. If this theory is further disciplined by convergence to Nash behavior, it produces no strict refinement of the Nash equilibrium~set.

\begin{proposition}\label{Prop:unrefutability}\rm For each $\sigma\in\Nash(\Gamma)$, there is a sequence of vanishing perturbations $\{\mu^\lambda\}_{\lambda\in \N}$ and a sequence of corresponding $\sigma^\lambda\in\BNE(\Gamma,\mu^\lambda)$ that converges to $\sigma$.
\end{proposition}

Thus, requiring proximity to the empirical content of unrestricted randomly disturbed payoff models does not refine the set of Nash equilibria.

\citet{Mackelvey-Palfrey-1996-JER} and \citet{Goeree-Holt-Palfrey-2005-EE} argue that additive randomly disturbed payoff models lack of refutability can be resolved by imposing consistency with payoff monotonicity, a phenomenon for which there is empirical support. Relatedly, two attempts have been made with the purpose of refining the set of Nash equilibria based on monotone randomly disturbed payoff models.  First, \citet{VanDamme-1991-Springer} imposes permutation invariance of perturbations. Second, \citet{Mackelvey-Palfrey-1996-JER} propose approximation by logistic QRE, i.e., restrict to a particular parametric family of perturbations. Both constructions are implicitly imposing that best responses are ordinally equivalent to expected utility.

The following theorem allows us to identify a sharp difference between these and other possible approaches based on monotone additive randomly disturbed payoff models and empirical equilibrium. It states that for each strategy space in which at least an agent has at least three actions available, one can always construct a payoff matrix so the resulting normal-form game possesses an empirical equilibrium that cannot be approximated by any additive randomly disturbed payoff model whose associated best response correspondences are weakly monotonic.

\begin{definition}\rm Let $\Mcal\subseteq \Bcal$ be the set of perturbations $\mu$ for which for each $u\in\Ucal$, each $i\in N$, and each pair $\{a_i,b_i\}$, if $B^{u_i,\mu_i}_{a_ii}(\sigma_{-i})>B^{u_i,\mu_i}_{b_ii}(\sigma_{-i})$, then $E_{(\sigma_{-i},a_i)}u_i>E_{(\sigma_{-i},b_i)}u_i$.
\end{definition}

\begin{theorem}\label{Thm:Paradox2}\rm Suppose that at least an agent has at least three actions available. Then, there is $u\in \Ucal$ for which there is $\sigma^*\in\ENash(\Gamma(u))$ and  $\varepsilon>0$ such that \[\{\sigma:||\sigma-\sigma^*||<\varepsilon\}\cap \{\sigma:\exists\mu\in\Mcal,\textrm{ s.t. }\sigma\in \BNE(\Gamma(u),\mu)\}=\emptyset.\]
\end{theorem}

Theorem~\ref{Thm:Paradox2} reveals that even though additive randomly disturbed payoff models are not refutable if unrestricted, their empirical content, if restricted by monotonicity of best responses, is restricted beyond weak payoff monotonicity. The weakly monotone behavior that is missed by the empirical content of these models can be that in the neighborhood of a Nash equilibrium. In particular, this implies that the refinements proposed by \citet{VanDamme-1991-Springer} and \citet{Mackelvey-Palfrey-1996-JER}  depend on their structural form of approximation and are strict subrefinements of empirical equilibrium for some games.

\subsection{Approachability by behavior in \citet{VanDamme-1991-Springer}'s control costs games}\label{Sec:CCosts}

We learn from Theorem~\ref{Thm:EE=R} that if behavior is weakly payoff monotone and approaches a Nash equilibrium, there is a regular QRE model that fits this behavior. The family of regular QRFs is infinitely dimensional. Thus, this theorem does not point exactly to a parametric family of models that is well specified for the analysis of experimental data that one expects to be payoff monotone.

By Theorem~\ref{Thm:Paradox2}, the most obvious candidate, the monotone structural QRE model, including all its parametric incarnations, e.g., the logistic form, is not flexible enough  to account for some payoff monotone behavior in finite games. This brings the need to find a parametric incarnation of the regular QRE model that spans all possible payoff monotonic behavior, and thus is suitable for empirical analysis. In this section we develop such a parametric model. As a byproduct we provide a constructive proof of our results in Sec.~\ref{Sec:rQRE}.

An alternative approach to rationalize deviations from utility maximizing behavior in normal-form games is \cite{VanDamme-1991-Springer}'s control costs model. Here agents are parameterized by a function that determines how difficult for the agent is to make no mistake when playing a given strategy.

A \textit{control cost function} for player $i$ is $f_i:[0,1]\rightarrow\R\cup\{\infty\}$ with the following properties.

\medskip
\noindent- $f_i$ is strictly decreasing with $f_i(0)=\infty$ and $f_i(1)=0$.

\noindent- $f_i$ is continuously differentiable on $(0,1]$.\footnote{\citet{VanDamme-1991-Springer} assumes that each $f_i$ is twice differentiable. One can easily see that Lemmas 4.2.1-4.2.5 and Theorem 4.2.6. in \citet{VanDamme-1991-Springer} go through with our weaker assumption. One needs continuous differentiability in order to apply Lagrange's theorem in Lemma 4.2.3. All other results follow from convexity and continuity. The greater generality of our model allows us to easily construct control cost functions hitting some specific targets of its derivative without matching the second derivative.}

\noindent- $f_i$ is a strictly convex function.

\medskip

Given a normal-form game $\Gamma(u)=(N,A,u)$ and a profile of control cost functions $f:=(f_i)_{i\in N}$ the associated game with control costs is that in which players are $N$, agent $i$'s action space is $\Delta(A_i)$, and payoff of action profile $\sigma\in\Sigma(A)$ is for each $i\in N$, $\exputil_{\sigma}u_i-\sum_{a_i\in A_i}f_i(\sigma_i(a_i))$. For each $\sigma_{-i}\in\Sigma(A)_{-i}$, there is a unique best response for agent $i$ in each control cost game, which solely depends on the profile of expected utility of the different actions in $\Gamma$ \citep[Lemma 4.2.1][]{VanDamme-1991-Springer}. Let $p^f_i$ be this function. One can easily see that this function is a regular QRF \citep{VanDamme-1991-Springer,Goeree-et-al-2018}.

The control cost model is less general than the regular QRE model to the extent that control cost functions impose some restrictions of behavior across different extended games, i.e., if one varies $u$. We now show that for a fixed $u$, they are behaviorally equivalent.

Let $\sigma$ be a Nash equilibrium of the control costs game associated with $\Gamma(u)$ and $f$. By interiority of each $p^f_i$ and continuous differentiability and strict convexity of each $f_i$, $\sigma$ can be characterized by the first order conditions \citep[Lemma 4.2.3][]{VanDamme-1991-Springer}: for each $i\in N$ and $\{a_l,a_k\}\subseteq A_i$,
\[\exputil_{(\sigma_{-i},a_l)}u_i-\exputil_{(\sigma_{-i},a_k)}u_i=f'_i(\sigma_i(a_l))-f'_i(\sigma_i(a_k)).\]
Thus, we can span the whole spectrum of interior payoff monotone behavior with a family of control costs functions whose derivative can be chosen for each $i\in N$ at $|A_i|-1$ given points in $(0,1)$. Indeed, an asymptote at zero stitched to a second order spline is flexible enough to interpolate any increasing slope and at the same time guarantee strict convexity and continuous differentiability.

Formally, let $\sigma$ be interior and payoff monotone for $\Gamma(u)$. Suppose for simplicity that $A_i:=\{a_1,...,a_K\}$ and $\sigma_i(a_1)\leq...\leq\sigma_i(a_K)$. Let $m_{l+1}<0$ and $m_l:= m_{l+1}-(\exputil_{(\sigma_{-i},a_{l+1})}u_i-\exputil_{(\sigma_{-i},a_{l})}u_i)$. Since $\sigma$ is payoff monotone for $\Gamma$, $m_l\leq m_{l+1}$.  The following function is strictly convex, strictly decreasing, and interpolates slopes $m_l$ and $m_{l+1}$ at the respective extremes of the interval in which it is defined: for each $y\in [\sigma_i(a_l),\sigma_i(a_{l+1})]$,
\begin{equation}f(y):= f(\sigma_i(a_{l+1}))+m_{{l+1}}(y-\sigma_{i}(a_{l+1}))+\frac{m_{l+1}-m_l}{2(\sigma_i(a_{l+1})-\sigma_i(a_l))}(y-\sigma_i(a_{l+1}))^2.\label{Eq:sticth1}\end{equation}
Let $\varepsilon>0$. Define $f_i$ on $[\sigma_i(a_K),1]$ as $y\mapsto (\varepsilon/[2(1-\sigma_i(a_K))])(y-1)^2$.\footnote{This is simply a
strictly decreasing, strictly convex, function such that $f_i(1)=0$, which parameterizes our construction.} Then, stitch the second degree polynomials on the subsequent intervals $[\sigma_i(a_l),\sigma_i(a_{l+1})]$ for $l=K-1,K-2,...,1$, i.e., define $f_i$ as in (\ref{Eq:sticth1}) in the respective intervals. Finally, stitch a strictly decreasing and strictly convex asymptote defined on $(0,\sigma_i(a_1)]$. With a view towards identifying control costs functions with further properties, one can add a calibration point $y^*_i\in(0,1)$ such that $y^*_i\not\in\sigma_i(A_i):=\{\sigma_i(a_1),...,\sigma_i(a_K)\}$ and guarantee that $f_i(y^*)<f_i(\sigma_i(a_l))+\varepsilon$ where $\sigma_i(a_l)$ is the minimum in $\{\sigma_i(a_1),...,\sigma_i(a_K)\}$ that is greater than $y^*$. Let us refer to $f:=(f_i)_{i\in N}$ so constructed as a \textit{spline calibrated by $\sigma$, $y^*:=(y^*_i)$, and $\varepsilon$}.

\begin{lemma}\label{Lm:cost=mon}\rm Let $\sigma$ be an interior payoff monotone distribution for $\Gamma(u)$; $y^*\in (0,1)^N$ such that for each $i\in N$, $y^*_i\not \in \sigma_i(A_i)$; and $\varepsilon>0$. Then $\sigma$ is a Nash equilibrium of the control costs game associated with $\Gamma(u)$ and each spline calibrated by $\sigma$, $y^*$, and $\varepsilon$.
\end{lemma}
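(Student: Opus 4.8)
The plan is to reduce the claim to verifying the first-order conditions for the control costs game at $\sigma$, which by \citet[Lemmas 4.2.1 and 4.2.3]{VanDamme-1991-Springer} characterize the unique, interior best responses. Since $\sigma$ is interior by hypothesis, to show that $\sigma$ is a Nash equilibrium of the control costs game associated with $\Gamma$ and a spline $f$ calibrated by $\sigma$, $y^*$, and $\varepsilon$, it suffices to show that for each $i\in N$ and each pair $\{a_l,a_k\}\subseteq A_i$,
\[U_i(\sigma_{-i},a_l)-U_i(\sigma_{-i},a_k)=f'_i(\sigma_i(a_l))-f'_i(\sigma_i(a_k)).\]
Indeed, the control cost objective $U_i(\sigma_{-i},\cdot)-\sum_{a_i}f_i(\cdot)$ is linear minus strictly convex, hence strictly concave, so its unique maximizer $p^f_i(\sigma_{-i})$ is the only interior distribution satisfying these equalities; once they hold at the interior $\sigma_i$ we obtain $\sigma_i=p^f_i(\sigma_{-i})$ for each $i$, i.e., $\sigma$ is a fixed point of the best-response correspondence and thus a Nash equilibrium.

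First I would confirm that each $f_i$ produced by the construction preceding the statement is a genuine control cost function: strictly decreasing, strictly convex, continuously differentiable on $(0,1]$, with $f_i(0)=\infty$ and $f_i(1)=0$. The top piece on $[\sigma_i(a_K),1]$ has value $0$ at $1$ and left-slope $-\varepsilon$ at $\sigma_i(a_K)$, so one sets $m_K=-\varepsilon<0$; each quadratic piece (\ref{Eq:sticth1}) matches slope $m_{l+1}$ at its right endpoint and $m_l$ at its left endpoint and has positive leading coefficient on every nondegenerate interval, so the stitched function is $C^1$ and piecewise strictly convex; and the asymptote on $(0,\sigma_i(a_1)]$ supplies $f_i(0)=\infty$ while matching slope $m_1$. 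Global strict convexity follows from $m_l\leq m_{l+1}$ across junctions, which is exactly where I would invoke payoff monotonicity: $\sigma_i(a_l)\leq\sigma_i(a_{l+1})$ forces $U_i(\sigma_{-i},a_l)\leq U_i(\sigma_{-i},a_{l+1})$, so the recursion $m_l=m_{l+1}-(U_i(\sigma_{-i},a_{l+1})-U_i(\sigma_{-i},a_l))$ yields $m_l\leq m_{l+1}$, with strict inequality precisely on the nondegenerate intervals.

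The verification of the first-order conditions is then a telescoping computation. By construction $f'_i(\sigma_i(a_l))=m_l$ for each $l$, and the recursion gives $m_{l+1}-m_l=U_i(\sigma_{-i},a_{l+1})-U_i(\sigma_{-i},a_l)$; summing over the consecutive indices between $k$ and $l$ yields $m_l-m_k=U_i(\sigma_{-i},a_l)-U_i(\sigma_{-i},a_k)$, which is precisely the required equality. The calibration point $y^*_i$ and the scale $\varepsilon$ enter only through the magnitude of $f_i$, not through its derivatives at the points $\sigma_i(a_l)$, so the conclusion is insensitive to them; the hypothesis $y^*_i\notin\sigma_i(A_i)$ is used only to ensure that inserting the calibration node does not disturb the prescribed slopes at the $\sigma_i(a_l)$.

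I expect the main obstacle to be bookkeeping at the degenerate and boundary cases rather than any conceptual difficulty: handling ties $\sigma_i(a_l)=\sigma_i(a_{l+1})$, where the corresponding interval collapses and payoff monotonicity forces $m_l=m_{l+1}$ so that no inconsistency arises; securing $C^1$ stitching at the junctions with the top quadratic and with the asymptote; and checking that the calibration node can be accommodated while preserving both strict convexity and the slopes $m_l$ at $\sigma_i(a_l)$. None of these affects the first-order conditions, so the argument goes through once the construction is confirmed to yield a genuine control cost function.
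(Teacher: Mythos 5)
Your proposal is correct and follows essentially the same route as the paper's proof: reduce the claim to van Damme's first-order-condition characterization of equilibria in control cost games (interiority from $f_i(0)=\infty$ plus Lemma 4.2.1/Theorem 4.2.6), check that the calibrated spline is a genuine control cost function with payoff monotonicity delivering the slope ordering $m_l\leq m_{l+1}$ (hence convexity), and conclude by the telescoping computation $f_i'(\sigma_i(a_l))-f_i'(\sigma_i(a_k))=m_l-m_k=U_i(\sigma_{-i},a_l)-U_i(\sigma_{-i},a_k)$. The tie-handling you flag as bookkeeping is resolved in the paper exactly as you anticipate, by indexing the spline pieces over the distinct values $\{y_1,\dots,y_L\}$ of $\sigma_i$ rather than over actions.
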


The control costs model also allows us to identify sequences of regular QRFs that are utility maximizing in the limit. We say that a sequence of profiles of control costs functions,  $\{f^\lambda\}_{\lambda\in\N}$, \textit{vanishes}, if for each $i\in N$ and each $x\in(0,1]$,  $\lim_{\lambda\rightarrow\infty}f^\lambda_i(x)=0$.  It is well known that the behavior in games with vanishing control costs can converge only to Nash equilibria of the underlying game \citep[Theorem 4.3.1]{VanDamme-1991-Springer}.\footnote{Technically, Theorem 4.3.1 in \citet{VanDamme-1991-Springer} applies only to vanishing sequences of control cost functions of the form $\varepsilon f$ with $\varepsilon\rightarrow 0$. One can easily see that his argument extends for a general sequence of vanishing control cost functions as we define it because our assumption also implies that for each $x\in(0,1]$, $(f^\lambda_i)'(x)\rightarrow0$. We have included an explicit proof of this result in an online Appendix.\label{Footnote:vanDammeT4.3.1}} Thus, for a sequence of vanishing control cost functions its associated sequence of regular QRFs are utility maximizing in the limit.

Empirical equilibria can be alternatively characterized as the limits of behavior in games with vanishing control costs.

\begin{theorem}\label{Th:E=C}\rm  Let $\sigma\in \Nash(\Gamma(u))$ and $\{\sigma^\lambda\}_{\lambda\in\N}$ a convergent sequence of interior payoff monotone distributions for $\Gamma(u)$ whose limit is $\sigma$. Then, there is an increasing $\kappa:\N\rightarrow\N$ and $\{y^{\kappa(\lambda)}\}_{\lambda\in\N}$, where for each $\lambda\in\N$, $y^{\kappa(\lambda)}\in(0,1)^N$, such that:

\begin{enumerate}
\item  As $\lambda\rightarrow\infty$ the sequence of splines calibrated by $\sigma^{\kappa(\lambda)}$, $y^{\kappa(\lambda)}$, and $1/\kappa(\lambda)$ vanishes.
\item  For each $\lambda\in\N$, $\sigma^{\kappa(\lambda)}$ is a Nash equilibrium of the control costs game associated with $\Gamma(u)$ and the spline calibrated by $\sigma^{\kappa(\lambda)}$, $y^{\kappa(\lambda)}$, and $1/\kappa(\lambda)$.
\end{enumerate}
\end{theorem}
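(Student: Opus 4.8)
The plan is to obtain requirement~2 for free from Lemma~\ref{Lm:cost=mon} and to spend all the effort on requirement~1 (vanishing). Once $\kappa$ and calibration vectors $y^{\kappa(\lambda)}\in(0,1)^N$ with $y^{\kappa(\lambda)}_i\notin\sigma^{\kappa(\lambda)}_i(A_i)$ are in hand, requirement~2 is immediate: each $\sigma^{\kappa(\lambda)}$ is interior and payoff monotone, so Lemma~\ref{Lm:cost=mon} makes it a Nash equilibrium of the associated control cost game. To prepare the construction I first pass to a subsequence along which, for each player $i$, the ordering of the probabilities $\{\sigma^{\kappa(\lambda)}_i(a_i)\}_{a_i\in A_i}$ is independent of $\lambda$; this is free as there are finitely many orderings. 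Writing $S_i\equiv\{a_i\in A_i:\sigma_i(a_i)>0\}$ for the support of the limit, along this subsequence each probability of an action outside $S_i$ tends to $0$ while each probability of an action in $S_i$ stays bounded away from $0$, so after discarding finitely many indices there is a nonempty gap between $\max\{\sigma^{\kappa(\lambda)}_i(a_i):a_i\notin S_i\}$ and $s_i^\lambda\equiv\min\{\sigma^{\kappa(\lambda)}_i(a_i):a_i\in S_i\}$.

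The crux is that the slopes of the calibrated spline are pinned by the first order conditions: the slope at the largest knot is $-1/\kappa(\lambda)$, and the slope of $f^{\kappa(\lambda)}_i$ at $\sigma^{\kappa(\lambda)}_i(a_i)$ equals $-1/\kappa(\lambda)-\big(U_i(\sigma^{\kappa(\lambda)}_{-i},\bar a_i)-U_i(\sigma^{\kappa(\lambda)}_{-i},a_i)\big)$, where $\bar a_i$ is the action of largest probability. For $a_i\in S_i$ both $a_i$ and $\bar a_i$ are best responses at the limit Nash equilibrium $\sigma$, so this utility difference tends to $0$; together with $1/\kappa(\lambda)\to0$ every support-knot slope tends to $0$. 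Because the value of $f^{\kappa(\lambda)}_i$ at the top knot, $\tfrac{1}{2\kappa(\lambda)}(1-\sigma^{\kappa(\lambda)}_i(\bar a_i))$, also tends to $0$, telescoping the spline~\eqref{Eq:sticth1} across the support intervals shows that $f^{\kappa(\lambda)}_i$ tends to $0$ at the support boundary knot $s_i^\lambda$.

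The obstacle lies strictly below $s_i^\lambda$: there the first order conditions force the slope at a non-support knot to approach $-\big(U_i(\sigma_{-i},\bar a_i)-U_i(\sigma_{-i},a_i)\big)$, which is strictly negative whenever $a_i$ is suboptimal in the limit, so the spline values do \emph{not} vanish there. I quarantine this region by choosing, for each $i$, a calibration point $y^{\kappa(\lambda)}_i$ inside the gap just above $\max\{\sigma^{\kappa(\lambda)}_i(a_i):a_i\notin S_i\}$ (or below the smallest knot if $S_i=A_i$), with $y^{\kappa(\lambda)}_i\to0$ and $y^{\kappa(\lambda)}_i\notin\sigma^{\kappa(\lambda)}_i(A_i)$. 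With this placement the smallest knot exceeding $y^{\kappa(\lambda)}_i$ is $s_i^\lambda$, so the calibration guarantee built into the spline gives $f^{\kappa(\lambda)}_i(y^{\kappa(\lambda)}_i)<f^{\kappa(\lambda)}_i(s_i^\lambda)+1/\kappa(\lambda)\to0$; this guarantee can be met while keeping $y^{\kappa(\lambda)}_i\to0$ precisely because the support boundary slope already tends to $0$, so the spline can be kept nearly flat from $s_i^\lambda$ down to $y^{\kappa(\lambda)}_i$, with all the steepness demanded by the suboptimal actions pushed into the shrinking interval $(0,y^{\kappa(\lambda)}_i)$. Finally, since each $f^{\kappa(\lambda)}_i$ is nonnegative and decreasing and $y^{\kappa(\lambda)}_i\to0$, for every fixed $x\in(0,1]$ we eventually have $0\le f^{\kappa(\lambda)}_i(x)\le f^{\kappa(\lambda)}_i(y^{\kappa(\lambda)}_i)\to0$, which is vanishing. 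The main obstacle is exactly this last maneuver: reconciling the FOC-forced, non-vanishing slopes of suboptimal actions with the required vanishing, which succeeds only because the Nash hypothesis drives the support-side slopes to zero and thereby lets the calibration point herd the offending slopes into a neighborhood of the origin.
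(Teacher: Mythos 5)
Your proposal follows the same overall strategy as the paper's proof: quarantine the slopes that the first-order conditions force to stay bounded away from zero (those attached to actions that are suboptimal against $\sigma_{-i}$) inside an interval $(0,y^{\kappa(\lambda)}_i)$ shrinking to the origin, show by telescoping that the spline values at and above the boundary knot tend to zero, and finish by monotonicity of $f^{\kappa(\lambda)}_i$. Your unification of the paper's three-case analysis through the support $S_i$, the slope identity at the knots, the value of the top piece, and the telescoping step are all correct, and requirement~2 does come for free from Lemma~\ref{Lm:cost=mon} once the calibrated splines are in hand.

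The gap is the feasibility claim for the calibration guarantee: you assert it ``can be met while keeping $y^{\kappa(\lambda)}_i\to0$ \ldots because the spline can be kept nearly flat from $s_i^\lambda$ down to $y^{\kappa(\lambda)}_i$.'' Convexity forbids this. Since $(f^{\kappa(\lambda)}_i)'$ is increasing, on $[y^{\kappa(\lambda)}_i,s_i^\lambda]$ the slope is everywhere at least as steep as the slope at $s_i^\lambda$, which the first-order conditions pin to $-1/\kappa(\lambda)-\delta_\lambda$, where $\delta_\lambda\equiv U_i(\sigma^{\kappa(\lambda)}_{-i},\bar a_i)-U_i(\sigma^{\kappa(\lambda)}_{-i},a_s)\geq 0$ and $a_s$ is the support-boundary action. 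Hence
\[
f^{\kappa(\lambda)}_i\bigl(y^{\kappa(\lambda)}_i\bigr)\;\geq\; f^{\kappa(\lambda)}_i\bigl(s_i^\lambda\bigr)+\bigl(1/\kappa(\lambda)+\delta_\lambda\bigr)\bigl(s_i^\lambda-y^{\kappa(\lambda)}_i\bigr),
\]
and with $y^{\kappa(\lambda)}_i\to0$ and $s_i^\lambda$ bounded away from zero the excess is of order $\delta_\lambda$. Your choice of $\kappa$ (constant ordering plus a nonempty gap) gives no control of $\delta_\lambda$ relative to $1/\kappa(\lambda)$: the sequence $\{\sigma^\lambda\}$ is arbitrary, so $\delta_\mu$ may vanish more slowly than $1/\mu$ along \emph{every} subsequence (e.g.\ at rate $1/\log\mu$), in which case the tolerance $f_i(y^{\kappa(\lambda)}_i)<f_i(s_i^\lambda)+1/\kappa(\lambda)$ that is built into the definition of a calibrated spline is unattainable, and the objects that both requirements of Theorem~\ref{Th:E=C} quantify over simply do not exist for your $\kappa$. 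This is precisely what the paper's rate-matching step supplies and what you skipped: for each $\lambda$ the paper first selects an index at which the utility differences among agent $i$'s limit best responses are at most $1/\lambda$, and only then builds the spline with parameters smaller than $1/\lambda$, so that the forced slope at the boundary knot is commensurate with the calibration tolerance. What your reasoning actually delivers is the weaker fact that the drop over $[y^{\kappa(\lambda)}_i,s_i^\lambda]$ tends to zero; that is enough to produce \emph{some} vanishing sequence of control cost functions having each $\sigma^{\kappa(\lambda)}$ as an equilibrium (and hence enough for Theorem~\ref{Thm:EE=R}), but it does not meet the parameter bookkeeping in the statement of Theorem~\ref{Th:E=C} without a rate-matched choice of $\kappa$.
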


\section{Discussion and concluding remarks}\label{Sec:Discussion}

We have advanced our understanding of the empirical equilibrium refinement. Empirical equilibria are defined as the Nash equilibria that are the limit weakly payoff monotone behavior. Alternatively,  they can be characterized as those that are the limits of interior payoff monotone behavior (Sec.~\ref{Sec:payoffmononoteapproach}). This characterization facilitates the computation of this set in applications. Empirical equilibria are also the Nash equilibria that are the limits of regular QRE behavior for sequences of noisy best responses that approximate  best response operators in the limit (Sec.~\ref{Sec:rQRE}). A particular finite dimensional family of regular QRE, second order spline control cost games, actually span the whole empirical content of regular QRE for a fixed game (Sec.~\ref{Sec:CCosts}). This characterization provides a clear connection between this equilibrium refinement and one of the most popular structural theories for the analysis of data from economics experiments.

In contrast to regular QRE approximation, monotone best response additive randomly disturbed payoff models do not span the whole empirical content of regular QRE for a fixed game and thus are not a basis to define empirical equilibrium (Sec.\ref{Sec:Harsanyapproach}). From the point of view of the foundations of this refinement this result highlights that it is based only on a refutable theory, weak payoff monotonicity, and not on implicit or less understood cardinal or structural assumptions. By revealing this differences we also advance our understanding of randomly disturbed payoff models, in particular, the structural QRE model. As Theorem~\ref{Thm:Paradox2} is stated, it implies that for each action space in which at least an agent has at least three actions one can construct a payoff matrix that admits an empirical equilibrium that is not in the closure of the empirical content of randomly disturbed payoff models whose best response correspondences are weakly payoff monotone.  This result can be generalized for the structural QRE model. Trivially, the later requirement can be stated in terms of the corresponding structural QRFs requiring monotonicity directly on them. More interestingly,  one can actually simply require that structural QRFs admit only weakly monotone fixed points.\footnote{More precisely, one can show, by means of a fixed point argument that if a QRF violates weak payoff monotonicity, then there is a payoff matrix for which there is a QRE that violates weak payoff monotonicity.} Thus,  for each action space in which at least an agent has at least three actions one can construct a payoff matrix that admits an empirical equilibrium that is not in the closure of the empirical content of structural QRE that generate weakly payoff monotone behavior.

Thus, our results show a clear tradeoff between specification and refutability of structural QRE models. If there is no reason to rule out behavior beyond payoff monotonicity, a data analysis based on a parametric version of the structural QRE model that guarantees monotonicity may be misspecified. As a consequence, the interpretation of structural QRE estimates requires that specification of the model be empirically addressed.

\section{Appendix}

\begin{proof}[Proof of Theorem~\ref{Thm:EE=APP-interior}] Payoff monotone distributions are weakly payoff monotone. Thus we only need to prove that an empirical equilibrium is always the limit of interior payoff monotone distributions. Let $\mu$ be weakly payoff monotone for $u\in\Ucal$. Let $\varepsilon>0$. We prove that there is an interior $\gamma$ that is payoff monotone for $u$ such that $||\mu-\gamma||<\varepsilon$. This implies that $\sigma\in N(\Gamma(u))$ is the limit of a sequence of weakly payoff monotone distributions for $u$ if and only if it is the limit of a sequence of interior payoff monotone distributions for $u$.

For each $i\in N$, each $\zeta\in(0,1)$, and each profile of distributions $\beta\in \Sigma(A)$, let
\[f^\zeta_i(\beta):= (1-\zeta)\mu_i+\zeta l^\lambda((\exputil_{(\beta_{-i},a_i)}u_i)_{a_i\in A_i}),\]
where $l^\lambda$ is the logistic QRF defined in~(\ref{Equation-Logistic-QRE}).

Let $\gamma^\zeta$ be a fixed point of $f^\zeta$, that exists because $f^\zeta$ is continuous. Let $\{a_i,\hat a_i\}\subseteq A_i$. Suppose first that $\mu_i(a_i)=\mu_i(\hat a_i)$. We know that
\[l^\lambda_{a_i}((\exputil_{(\gamma^\zeta_{-i},b_i)}u_i)_{b_i\in A_i})\geq l^\lambda_{\hat a_i}((\exputil_{(\gamma^\zeta_{-i},b_i)}u_i)_{b_i\in A_i}),\]
if and only if $\exputil_{(\gamma^\zeta_{-i},a_i)}u_i\geq \exputil_{(\gamma^\zeta_{-i},\hat a_i)}u_i$. Thus, $\exputil_{(\gamma^\zeta_{-i},a_i)}u_i\geq \exputil_{(\gamma^\zeta_{-i},\hat a_i)}u_i$ if and only if $\gamma^\zeta(a_i)\geq \gamma^\zeta(\hat a_i)$. Suppose then that $\mu_i(a_i)>\mu_i(\hat a_i)$. Since $\mu$ is weakly payoff monotone for $u$, $\exputil_{(\mu_{-i},a_i)}u_i> \exputil_{(\mu_{-i},\hat a_i)}u_i$. Since as $\zeta\rightarrow0$, $\gamma^\zeta\rightarrow\mu$, there is $c>0$ such that for each $\zeta<c$, $\gamma^\zeta_i(a_i)>\gamma^\zeta_i(\hat a_i)$ and $\exputil_{(\gamma^\zeta_{-i},a_i)}u_i> \exputil_{(\gamma^\zeta_{-i},\hat a_i)}u_i$. Thus, for each pair $\{a_i,\hat a_i\}\subseteq A_i$, there is $c>0$ such that for each $\zeta<c$, $\exputil_{(\gamma^\zeta_{-i},a_i)}u_i\geq \exputil_{(\gamma^\zeta_{-i},\hat a_i)}u_i$ if and only if $\gamma^\zeta(a_i)\geq \gamma^\zeta(\hat a_i)$. Since $\Gamma(u)$ has finite action spaces, there is $c>0$ such that for each $\zeta<c$, each $i\in N$, and each pair $\{a_i,\hat a_i\}\subseteq A_i$, $\exputil_{(\gamma^\zeta_{-i},a_i)}u_i\geq \exputil_{(\gamma^\zeta_{-i},\hat a_i)}u_i$ if and only if $\gamma^\zeta(a_i)\geq \gamma^\zeta(\hat a_i)$.
\end{proof}

\begin{proof}[\textit{Proof of Proposition~\ref{Prop:unrefutability}}]Let $u\in\Ucal$ and $\sigma\in\Nash(\Gamma(u))$. We prove that there is a sequence of vanishing purifying perturbations $\{\gamma^\lambda\}_{\lambda\in\N}$ and a corresponding sequence of Bayesian Nash equilibria in the disturbed games whose observable strategy distributions converge to $\sigma$. We construct perturbations with full support satisfying \citet{Harsanyi-1973-IJGT}'s requirements that can be easily modified to induce structural QRE models.

For each Lebesgue measurable and bounded set with non-empty interior $E\subseteq \R^A$, let $U(E)$ be the uniform distribution on $E$, i.e., the normalized Lebesgue measure on it. For each $\lambda\in\N$, let $V_{\varepsilon}\subseteq\R^A$ be the open ball centered at zero with radius $\varepsilon>0$,
\[T_\lambda^{a_i}:=\{\eta_i\in \R^A:\forall a_i'\in A_i\setminus\{a_i\},\forall a_{-i}\in A_{-i},\eta_i(a_{-i},a_i)>\eta_i(a_{-i},a_i')+1/(2|A_{-i}|\lambda)\},\]
and
\[\mu^\lambda_i:=\sum_{a_i\in A_i}\sigma_i(a_i)U(V_{1/\lambda}\cap T_\lambda^{a_i}).\]
For an arbitrary full support $\gamma\in\Bcal$ and for each $\delta\in(0,1)$, let $\gamma^\lambda:=\delta\mu^\lambda+(1-\delta)\gamma$. Clearly, the sequence $\{\gamma^\lambda\}_{\lambda\in \N}$ is vanishing.

Let $\sigma^\delta$ be a fixed point of the operator $\hat \sigma\mapsto (B_i^{u_i,\gamma_i^\lambda}(\delta\sigma_{-i}+(1-\delta)\hat\sigma))_{i\in N}$ and $\hat\sigma^\delta:=\delta\sigma+(1-\delta)\sigma^\delta$. Since $\sigma\in\Nash(\Gamma(u))$, for each $\eta_i\in T_\lambda^{a_i}$,
\[\begin{array}{c}\sum_{a_{-i}\in A_{-i}}(u_i(a_{-i},a_i)+\eta_i(a_{-i},a_i))\hat\sigma_{-i}^\delta(a_{-i})-\sum_{a_{-i}\in A_{-i}}(u_i(a_{-i},a_i')+\eta_i(a_{-i},a_i'))\hat\sigma_{-i}^\delta(a_{-i})\geq
\\\sum_{a_{-i}\in A_{-i}}(1-\delta)(u_i(a_{-i},a_i)-u_i(a_{-i},a_i'))\sigma^\delta_{-i}(a_{-i})+(\eta_i(a_{-i},a_i)-\eta_i(a_{-i},a_i'))\hat\sigma^\delta(a_{-i})\geq
\\\{\sum_{a_{-i}\in A_{-i}}(1-\delta)(u_i(a_{-i},a_i)-u_i(a_{-i},a_i'))\sigma^\delta_{-i}(a_{-i})\}+1/(2|A_{-i}|\lambda)\geq
\\(1-\delta)M+\min_{i\in N}1/(2|A_{-i}|\lambda),\end{array}\]
for some $M<0$ that neither depends on $a_i$ nor on $\eta_i$. Let $\delta(\lambda)\in[1-1/\lambda,1)$ be close enough to one so the expression above is greater than zero. Then, for each realization $\eta_i\in T_\lambda^{a_i}$, agent $i$'s unique expected utility maximizer when the other agents play according to $\hat\sigma^{\delta(\lambda)}$ is $a_i$. Since $\sigma^{\delta(\lambda)}$ is a fixed point of $\hat \sigma\mapsto (B_i^{u_i,\gamma_i^\lambda}(\delta\sigma_{-i}+(1-\delta)\hat\sigma))_{i\in N}$, $\hat \sigma^{\delta(\lambda)}=B_i^{u_i,\gamma^\lambda}(\hat\sigma^{\delta(\lambda)})$. Thus, $\hat\sigma^{\delta(\lambda)}\in\BNE(\Gamma(u),\gamma^\lambda)$ and  as $\lambda\rightarrow\infty$, $\hat\sigma^{\delta(\lambda)}\rightarrow\sigma$.
\end{proof}

\begin{table}[t]
  \centering
  \begin{tabular}{cp{2.5cm}p{2.5cm}p{2.5cm}p{2.5cm}p{2.5cm}}
      &\multicolumn{5}{c}{Player 1}
  \\
  &  \multicolumn{1}{c}{$a_1$}&\multicolumn{1}{c}{$a_2$}&\multicolumn{1}{c}{$\dots$}&\multicolumn{1}{c}{$a_{K-1}$}&
  \multicolumn{1}{c}{$a_{K}$}
\\\cline{2-6}
$a_{-1}^*$&\multicolumn{1}{|c}{$5$}&\multicolumn{1}{|c|}{$5$}&\multicolumn{1}{|c|}{$\dots$}
&\multicolumn{1}{|c|}{$ 5$}&\multicolumn{1}{|c|}{$5$}
\\\cline{2-6}
$A_{-1}\setminus\{a_{-1}^*\}$&\multicolumn{1}{|c}{$1$}&\multicolumn{1}{|c|}{$2$}&\multicolumn{1}{|c|}{$\dots$}&\multicolumn{1}{|c|}{$ 2$}&\multicolumn{1}{|c|}{$4$}\\\cline{2-6}
  \end{tabular}
    \caption{Game $\Gamma(u):=(N,A,u)$, $N:=\{1,...,n\}$, $A_1:=\{a_1,...,a_K\}$ with $K\geq 3$, and $|A_{-1}|\geq 2$. The table shows the payoff of agent~$1$. Each agent $j>1$ gets a payoff of $1$ if she plays $a^*_j$ and zero otherwise.}\label{Tab:Upsilon1}
\end{table}

We now prove Theorem~\ref{Thm:Paradox2}. The game that allows us to prove this result is defined in Table~\ref{Tab:Upsilon1}. In this game each agent $i\neq 1$ has a strictly dominant action. The profile of these strictly dominant actions for these agents is~$a_{-1}^*$. Agent~$1$ is indifferent among all actions if all other agents play their strictly dominant action. Agent~$1$ has three different types of actions. Action~$a_1$, which is weakly dominated by actions~$a_2,...,a_{K-1}$, which are all payoff equivalent. All actions $\{a_1,...,a_{K-1}\}$ are weakly dominated by~$a_K$ for this agent. The essential feature of this game is that given any $\sigma_{-1}$ for which $\sigma_{-i}(a^*_{-i})<1$, the difference in agent~$1$'s expected payoff between actions~$a_K$ and~$a_{K-1}$ is greater than the difference in expected payoff between actions~$a_2$ and~$a_1$.

\begin{lemma}\label{Lm:EUpsilon1}\rm Let $\Gamma(u)$ be the game in Table~\ref{Tab:Upsilon1}. There is  $\sigma\in \Nash(\Gamma(u))$ that belongs to the closure of $\{\gamma:\gamma\textrm{ is weakly payoff monotone for }u\}$ and in which each agent $j\neq 1$ plays the strictly dominant action and $\sigma_1(a_1)<1/K<\sigma_1(a_2)= \dots=\sigma_1(a_{K-1})<\sigma_1(a_{K})$.
\end{lemma}

\begin{proof}[\textit{Proof of Lemma~\ref{Lm:EUpsilon1}}]Clearly $\Nash(\Gamma(u))$ is the set of distributions in which each $j\neq1$ plays the dominant action and agent $1$ arbitrarily randomizes. Let $\sigma$ be such that each agent $j\neq 1$ plays the strictly dominant action with certainty, and $\sigma_1(a_1)<\sigma_1(a_2)= \dots=\sigma_1(a_{K-1})<\sigma_1(a_{K})$. Then, $\sigma\in \Nash(\Gamma(u))$. Let $\lambda\in\N$ and $\sigma^\lambda$ be the convex combination that places $(1-1/\lambda)$ weight on $\sigma$ and $1/\lambda$ on a uniform distribution. Clearly as $\lambda\rightarrow\infty$, $\sigma^\lambda\rightarrow\sigma$. Thus, there is $\Lambda\in\N$ such that for each $\lambda\geq \Lambda$, $\sigma^\lambda$ is ordinally equivalent to $\sigma$. Since for each $\lambda\in \N$, $\sigma^\lambda$ is interior, $E_{(\sigma^\lambda_{-i},a_1)}u_i<E_{(\sigma^\lambda_{-i},a_2)}u_i= \dots=E_{(\sigma^\lambda_{-i},a_{K-1})}u_i<E_{(\sigma^\lambda_{-i},a_K)}u_i$, and for each $j\neq i$, if $a^*_j\in A_j$ is this agent's dominant action, $E_{(\sigma^\lambda_{-j},a_j^*)}u_j>E_{(\sigma^\lambda_{-j},a_j)}u_j$, and for each pair of actions $\{a_j,a_j'\}\subseteq A_j$ that are not dominant, $E_{(\sigma^\lambda_{-j},a_j)}u_j=E_{(\sigma^\lambda_{-j},a_j')}u_j$. Thus, $\sigma^\lambda$ is weakly payoff monotone for $u$. Thus, $\sigma$ belongs to the closure of $\{\gamma:\gamma\textrm{ is weakly payoff monotone for }u\}$. Thus, for each $1/K<\alpha<1/(K-1)$, there is $\sigma\in \Nash(\Gamma(u))$ that belongs to the closure of $\{\gamma:\gamma\textrm{ is weakly payoff monotone for }u\}$ and such that $0=\sigma_1(a_1)<\alpha=\sigma_1(a_2)= \dots=\sigma_1(a_{K-1})<(1-(K-2)\alpha)=\sigma_1(a_{K})$.
\end{proof}

Lemma~\ref{Lm:EUpsilon1} states that there is weakly payoff monotone behavior for $u$ that is arbitrarily close to an empirical equilibrium of $\Gamma(u)$ in which agent~$1$ plays actions $\{a_2,...,a_{K-1}\}$ with probability greater than $1/K$. The following proposition identifies restrictions on distributions generated by additive randomly disturbed payoff models whose best response operators are weakly payoff monotone. The proof of Theorem~\ref{Thm:Paradox2} is completed by showing, based on these restrictions, that it is impossible for these models to generate behavior close to the equilibrium identified in Lemma~\ref{Lm:EUpsilon1}.

\begin{proposition}\label{Pro:k3-or-permut--nonmonot1}\rm Let $\Gamma(u)$ be the game in Table~\ref{Tab:Upsilon1} and $\mu\in\Mcal$. Let $\sigma_{-i}\in\Delta(A_{-i})$ be such that $E_{(\sigma_{-i},a_1)}u_i<E_{(\sigma_{-i},a_2)}u_i=\dots=E_{(\sigma_{-i},a_{K-1})}u_i< E_{(\sigma_{-i},a_K)}u_i$,
and $E_{(\sigma_{-i},a_K)}u_i-E_{(\sigma_{-i},a_{K-1})}u_i>E_{(\sigma_{-i},a_{2})}u_i-E_{(\sigma_{-i},a_{1})}u_i$. Then, $B^{u_i,\mu_i}_{ia_{K-1}}(\sigma_{-i})\leq1/K$.
\end{proposition}
\begin{proof}[\textit{Proof of Proposition~\ref{Pro:k3-or-permut--nonmonot1}}] Recall that in our notation $A_i=\{1,...,K\}$. For each $k\in\{1,...,K\}$ let $v_k:= E_{(\sigma_{-i},a_k)}u_i$, fore each $x\in \R^A$, $x_k:=\sum_{a_{-i}\in A_{-i}}x_{(a_{-i},a_k)}\sigma_{-i}(a_{-i})$, and
\[X_k:=\{x\in\R^{A}:\{k\}={\arg\max}_{l=1,...,K}x_{l}\}.\]
Let $\mu\in\Mcal$. Consider $\bar u_i\in\R^{A}$ for which agent $i$ has equal payoff from each action profile. Since $\mu\in\Mcal$, $B^{\bar u_i,\mu_i}_i(\sigma_{-i})=(1/K,...,1/K)$. Thus, for each $k=1,...,K$, $\mu_i(X_k)=1/K$.

Since $\mu\in\Bcal$, for each measurable set $G\subseteq\R^{A}$,
\begin{equation}\mu_i(G)=\sum_{l=1}^K\mu_i(G\cap X_k).\label{Eq:basis1}\end{equation}
Let $G:=\{x\in\R^{A}:\{K-1\}=\arg\max_{l=1,...,K}v_l+x_{l}\}$. Then, $B^{u_i,\mu_i}_{ia_{K-1}}(\sigma_{-i})=\mu_i(G)$. Since $v_1<v_2=\dots=v_{K-1}< v_K$,
\begin{equation}\mu_i(G)=\mu_i(G\cap X_1)+\mu_i(G\cap X_{K-1}).\label{Eq:g-expression1}\end{equation}
Let $D_1:= v_{2}-v_{1}$ and $D_2:= v_K-v_{K-1}$.   Consider $u'_i\in\R^{A}$ for which $y':= E_{(\sigma_{-i},\cdot)}u_i'$ is such that $y_1'=\dots=y_{K-1}'<y_{K}':= y_{K-1}'+D_2$ (one can simply make payoffs be independent of the action of the other agents). Since $\mu_i\in\Mcal$, $B^{u_i',\mu_i}_{ia_{1}}(\sigma_{-i})=B^{u_i',\mu_i}_{ia_{K-1}}(\sigma_{-i})$. Moreover,
\[\begin{array}{l}B^{u_i',\mu_i}_{ia_{K-1}}(\sigma_{-i})=\mu_i\left(\{x\in X_{K-1}:x_{{K-1}}>x_{K}+D_2\}\right),\\
B^{u_i',\mu_i}_{ia_1}(\sigma_{-i})=\mu_i\left(\{x\in X_{1}:x_{{1}}> x_{K}+D_2\}\right).\end{array}\]
Thus,
\[\mu_i\left(\{x\in X_{K-1}:x_{{K-1}}>x_{K}+D_2\}\right)=\mu_i\left(\{x\in X_{1}:x_{{1}}> x_{K}+D_2\}\right).\]
Since $\mu_i\in\Bcal_i$ and $\mu_i(X_1)=\mu_i(X_{K-1})$,
\begin{equation}\mu_i\left(\{x\in X_{K-1}:x_{{K-1}}<x_{K}+D_2\}\right)=\mu_i\left(\{x\in X_{1}:x_{{1}}< x_{K}+D_2\}\right).\label{Eq:reverse1}\end{equation}
We claim that
\begin{equation}\begin{array}{l}\mu_i\left(\{x\in X_{1}:\max\{x_1-D_1,x_2,...,x_{K-2},x_K\}<x_{{K-1}}\}\right)\\=\mu_i\left(\{x\in X_{1}:\max\{x_1-D_1,x_2,...,x_{K-1}\}<x_{{K}}\}\right).\end{array}\label{Eq:equaldist1}\end{equation}

Consider  $u_i''\in\R^{A}$ for which $y'':= E_{(\sigma_{-i},\cdot)}u_i''$ is such that $y_1''<y_2''=\dots=y_{K}'':= y_{1}''+D_1$.  Observe that
\[B^{u_i'',\mu_i}_{ia_{K-1}}(\sigma_{-i})=\mu_i\left(\left\{x\in X_{1}:\max\{x_1-D_1,x_2,...,x_{K-2},x_K\}<x_{{K-1}}\right\}\right)+\mu_i(X_{K-1}),\]
and
\[B^{u_i'',\mu_i}_{ia_K}(\sigma_{-i})=\mu_i\left(\left\{x\in X_{1}:\max\{x_1-D_1,x_2,...,x_{K-1}\}<x_{{K}}\right\}\right)+\mu_i(X_{K}).\]
Since $\mu\in\Mcal$, $B^{u_i'',\mu_i}_{ia_K}(\sigma_{-i})=B^{u_i'',\mu_i}_{ia_{K-1}}(\sigma_{-i})$. Thus, since $\mu_i(X_{K-1})=\mu_i(X_{K})$, (\ref{Eq:equaldist1}) follows.

Since $D_2\geq D_1$, by monotonicity of measures with respect to set inclusion
\[\begin{array}{l}\mu_i\left(\{x\in X_{1}:x_{1}<x_K+D_2\}\right)\\
\geq \mu_i\left(\{x\in X_{1}:x_{1}<x_{K}+D_1\}\right)\\
\geq \mu_i\left(\{x\in X_{1}:\max\{x_1-D_1,x_2,...,x_{K-1}\}<x_{{K}}\}\right).\end{array}\]
Replacing (\ref{Eq:reverse1}) and (\ref{Eq:equaldist1}) in the first and last expressions of the inequality above yields,
\[\begin{array}{l}\mu_i\left(\{x\in X_{K-1}:x_{K-1}<x_K+D_2\}\right)\\
\geq \mu_i\left(\{x\in X_{1}:\max\{x_1-D_1,x_2,...,x_{K-2},x_K\}<x_{{K-1}}\}\right).\end{array}\]
Now,
\[\mu_i(G\cap X_{K-1})=\mu_i(X_{K-1})-\mu_i\left(\{x\in X_{K-1}:x_{K-1}<x_K+D_2\}\right),\]
and by monotonicity of measures with respect to set inclusion,
\[\begin{array}{rl}\mu_i(G\cap X_{1})&=\mu_i\left(\{x\in X_{1}:\max\{x_1-D_1,x_2,...,x_{K-2},x_K+D_2\}<x_{K-1}\}\right)
\\
&\leq\mu_i\left(\{x\in X_{1}:\max\{x_1-D_1,x_2,...,x_{K-2},x_K\}<x_{K-1}\}\right).\end{array}\]
Thus,
\[\mu_i(G\cap X_{1})+\mu_i(G\cap X_{K-1})\leq \mu_i(X_{K-1})=1/K.\]
Thus, by (\ref{Eq:g-expression1}),
\[B^{u_i,\mu_i}_{ia_{K-1}}(\sigma_{-i})=\mu_i(G)=\mu_i(G\cap X_{1})+\mu_i(G\cap X_{K-1})\leq1/K.\]
\end{proof}

\begin{proof}[Proof of Theorem~\ref{Thm:Paradox2}]Let $\Gamma(u)$ be the game in Table~\ref{Tab:Upsilon1}. By Lemma~\ref{Lm:EUpsilon1}, there is $\sigma^*\in \Nash(\Gamma(u))$ that belongs to the closure of $\{\gamma:\gamma\textrm{ is weakly payoff monotone for }u\}$ in which each agent $j\neq 1$ plays the strictly dominant action and $\sigma_1^*(a_1)<1/K<\sigma_1^*(a_2)=\dots=\sigma_1^*(a_{K-1})<\sigma_1^*(a_{K})$. Thus, there is $\varepsilon>0$ for which for each $\sigma\in\Delta$ such that $||\sigma-\sigma^*||<\varepsilon$, $\sigma_1(a_1)<1/K<\sigma_1(a_2)$. Let $\sigma$ that is weakly payoff monotone for $u$ and $||\sigma-\sigma^*||<\varepsilon$. Since $\sigma_1(a_1)<1/K<\sigma_1(a_2)$ and $\sigma$ is weakly payoff monotone for $u$, we have that $E_{(\sigma_{-i},a_1)}u_i<E_{(\sigma_{-i},a_2)}u_i$. Thus, $\sigma_{-i}(a^*_{-i},a_1)<1$, for otherwise $E_{(\sigma_{-i},a_1)}u_i=E_{(\sigma_{-i},a_2)}u_i$. Thus, $E_{(\sigma_{-i},a_2)}u_i-E_{(\sigma_{-i},a_1)}u_i=(1-\sigma_{-i}(a^*_{-i},a_1))>0$ and $E_{(\sigma_{-i},a_K)}u_i-E_{(\sigma_{-i},a_{K-1})}u_i=2(1-\sigma_{-i}(a^*_{-i},a_1))$. Thus,  $E_{(\sigma_{-i},a_K)}u_i-E_{(\sigma_{-i},a_{K-1})}u_i>E_{(\sigma_{-i},a_2)}u_i-E_{(\sigma_{-i},a_1)}u_i$. Thus, there is no $\mu\in\Mcal$ such that $\sigma\in \BNE(\Gamma(u),\mu)$, for otherwise by Proposition~\ref{Pro:k3-or-permut--nonmonot1}, $\sigma_1(a_2)\leq 1/K$. Thus,
$\{\sigma:||\sigma-\sigma^*||<\varepsilon\}\cap \{\sigma:\exists\mu\in\Mcal,\textrm{ s.t. }\sigma\in \BNE(\Gamma(u),\mu)\}=\emptyset$.
\end{proof}

\begin{proof}[\textit{Proof of Lemma~\ref{Lm:cost=mon}}]Let $f$ be a control cost function. Since for each $f_i(0)=\infty$, any equilibrium of the game associated with $\Gamma(u)$ and $f$ is interior \citep[Lemma 4.2.1]{VanDamme-1991-Springer}. Moreover, $\sigma$ is an equilibrium of the control costs game associated with $\Gamma(u)$ and $f$ if and only if for each $i\in N$, and each pair $\{a_l,a_k\}\subseteq A_i$, \citep[Theorem 4.2.6.]{VanDamme-1991-Springer}
\begin{equation}\exputil_{(\sigma_{-i},a_l)}u_i-\exputil_{(\sigma_{-i},a_k)}u_i=f'_i(\sigma_i(a_l))-f'_i(\sigma_i(a_k)).\label{Eq:charac-control-cost-eq}\end{equation}
Thus, given $\sigma$, one can construct $f$ for which $\sigma$ is an equilibrium of the game associated with $\Gamma(u)$ and $f$ if for each $i$ one construct $f_i$ for which (\ref{Eq:charac-control-cost-eq}) holds.

Let $\sigma$ be interior and payoff monotone for $\Gamma(u)$. Let $i\in N$. Assume that $A_i:=\{a_1,...,a_K\}$. For each $k=1,...,K$, let $u_k:= \exputil_{(\sigma_{-i},a_k)}u_i$. Suppose without loss of generality that $u_1\leq...\leq u_K$. Since $\sigma$ is payoff monotone for $\Gamma(u)$, $\sigma_i$ is ordinally equivalent to $u$. In particular, $\sigma_i(a_1)\leq...\leq\sigma_i(a_K)$. Let $\{y_1,...,y_L\}:=\{y\in(0,1):\exists k\in\{1,...,K\},\sigma_i(a_k)=y\}$ be the set of values that $\sigma_i$ takes. For each $l=1,...,L$, let $u_l:= \exputil_{(\sigma_{-i},a_k)}u_i$ for $a_k$ such that $\sigma_i(a_k)=y_l$. Suppose without loss of generality that $0<y_1<...<y_L<1$. Since  $\sigma_i$ is ordinally equivalent to $u$, $u_1<...<u_L$. Fix $\varepsilon>0$ and $0<y_0<y_1$ and let $y_{L+1}=1$ ($y_0$ is not necessary for the proof of the lemma; we introduce it in order to use the argument later in the proof of Theorem~\ref{Th:E=C}). Let $m_{L+1}:=0$; $m_{L}:=-\varepsilon$; for each  $l=1,...,L-1$, let $m_l:= m_{l+1}-(u_{l+1}-u_l)$; and $m_0=m_1-\varepsilon$.

Consider $f_i$ that assigns to each $y\in(0,1]$ the value
\[f_i(y):=\left\{\begin{array}{ll}m_{L+1}(y-y_{L+1})+\frac{m_{L+1}-m_L}{2(y_{L+1}-y_L)}(y-y_{L+1})^2&\textrm{ if }y\in[y_{L},y_{L+1}].
\\
f_i(y_{l+1})+m_{l+1}(y-y_{l+1})+\frac{m_{l+1}-m_l}{2(y_{l+1}-y_l)}(y-y_{l+1})^2&\textrm{ if }y\in[y_{l},y_{l+1}),\ l=0,...,L-1.
\\
f_i(y_{0})-m_0y_0^2\frac{1}{y}&\textrm{ if }y\in(0,y_0)
\end{array}\right.\]
The function $f_i$ is a second order spline stiched to a hyperbole in the interval $(0,y_0)$. It is continuous and has continuous derivative on $(0,1]$. Indeed, it coincides with infinitely differentiable functions in $(0,1]$ with the exception of $y_0,...,y_L$. In these points, the derivative from the left and from the right match, so its derivative on $(0,1]$ is well defined and continuous. The function is strictly decreasing, for its derivative is always negative on $(0,1)$. It satisfies $\lim_{y\rightarrow 0}f_i(y)=\infty$. The second derivative of $f_i$ is well defined and positice on $(0,1]\setminus\{y_0,...,y_L\}$. Thus, the function is strictly convex.

Finally, let $\{a_s,a_k\}\subseteq A_i$. Let $l$ and $t$ be such that $\sigma_i(a_l)=y_l$ and $\sigma_i(a_k)=y_t$. Suppose without loss of generality that $l<t$. Then,
\[\begin{array}{ll}f_i'(\sigma_i(a_s))-f_i'(\sigma_i(a_k))&=m_l-m_{r}\\
&(m_l-m_{l+1})+(m_{l+1}-m_{l+2})+...+(m_{r-1}-m_r)
\\
&=(u_l-u_{l+1})+(u_{l+1}-u_{l+2})+...+(u_{r-1}-u_r)\\
&=u_l-u_r\\
&=\exputil_{(\sigma_{-i},a_s)}u_i-\exputil_{(\sigma_{-i},a_k)}u_i.\end{array}\]
\end{proof}

\begin{proof}[\textit{Proof of Theorem~\ref{Th:E=C}}]Let $\sigma\in\ENash(\Gamma(u))$ and $\{\sigma^\lambda\}_{\lambda\in\N}$ a sequence of interior payoff monotone distributions converging to $\sigma$. Let $i\in N$, $A_i:=\{a_1,...,a_K\}$. By passing to a subsequence if necessary we can suppose without loss of generality that for each $\lambda\in\N$, $\sigma_i^\lambda(a_1)\leq ...\leq \sigma_i^\lambda(a_K)$. By convergence of $\{\sigma^\lambda\}_{\lambda\in\N}$ we have that $\sigma_i(a_1)\leq ...\leq \sigma_i(a_K)$. Let $k$ be the lowest index for which $\exputil_{(\sigma_{-i},a_{k})}u_i=\exputil_{(\sigma_{-i},a_{K})}u_i$. Then, there is $\eta>0$ such that for each $l<k$, $\exputil_{(\sigma_{-i},a_{k})}u_i-\exputil_{(\sigma_{-i},a_{l})}u_i>\eta$.

Let $\{f^\lambda\}$ be a sequence of control cost functions constructed as in the proof of Lemma~\ref{Lm:cost=mon} for parameters $\varepsilon<1/\lambda$ and $y_0<1/\lambda$. Let $\{a_r,a_s\}\subseteq\{a_k,...,a_K\}$. Suppose that $\sigma_i(a_k)=0$. By passing to a subsequence if necessary we can assume that for each $\lambda\in\N$, $\sigma_i^\lambda(a_{k})<1/\lambda$ and $f_i^\lambda(\sigma^\lambda_i(a_k))<2/\lambda$ (this subsequence is constructed by first selecting a distribution for which the utility among the best responses of agent $i$ to $\sigma_{-i}$ have utility differences at most $1/\lambda$ and the probabilities are in the required ranges; then for that distribution construct the function $f_i^\lambda$ with $\varepsilon<1/\lambda$ and $y_0<1/\lambda$). Since $\lim_{\lambda\rightarrow\infty}\sigma_i^\lambda(a_k)=0$, for each $y\in(0,1]$, $\lim_{\lambda\rightarrow\infty} f^\lambda_i(y)=0$. Repeating the argument for each agent we construct a subsequence $\{f^\lambda\}$ that vanishes.

Suppose that  $\sigma_i(a_k)>0$ and $k>1$. By passing to a subsequence if necessary we can assume that for each $\lambda\in\N$, $\sigma_i^\lambda(a_{k-1})<1/\lambda$, $(f_i^\lambda)'(\sigma^\lambda_i(a_{k-1}))=(f_i^\lambda)'(\sigma^\lambda_i(a_{k}))+\exputil_{(\sigma^\lambda_{-i},a_{k-1})}u_i
-\exputil_{(\sigma^\lambda_{-i},a_{k-1})}u_i<4/\lambda$, $(f_i^\lambda)'(\sigma^\lambda_i(a_k))=1/\lambda+\exputil_{(\sigma^\lambda_{-i},a_{k})}u_i
-\exputil_{(\sigma^\lambda_{-i},a_{K})}u_i<2/\lambda$. Then for each $\lambda$ we can construct a control cost function $g^\lambda$ for which $\sigma^\lambda$ is an equilibrium of the game associated with $\Gamma(u)$ and $g^\lambda$, as in the proof of Lemma~\ref{Lm:cost=mon} for parameters $\varepsilon<1/\lambda$ and $y_0<1/\lambda$, and including a calibration point $y=1/\lambda$ strictly which is strictly in between $\sigma^\lambda_i(a_{k-1})$ and $\sigma^\lambda_i(a_k)$ and for which we can set a slope of $g^\lambda_i$ equal to $3\lambda$. Then, the sequence of control cost functions $\{g^\lambda\}_{\lambda\in\N}$ is such that for each $y\in(0,1]$, $\lim_{\lambda\rightarrow\infty} g^\lambda_i(y)=0$. The result concludes as in the previous case.

Finally, suppose that Suppose that  $\sigma_i(a_k)>0$ and $k=1$. Note that the slope of each $f^\lambda_i$ in the set $[y^\lambda_0,1]$ is bounded above by $2\varepsilon+\exputil_{(\sigma^\lambda_{-i},a_{1})}u_i-\exputil_{(\sigma^\lambda_{-i},a_{K})}u_i$. Since $\lim_{\lambda\rightarrow\infty}y^\lambda_0=0$ and $\lim_{\lambda\rightarrow\infty}\exputil_{(\sigma^\lambda_{-i},a_{1})}u_i-\exputil_{(\sigma^\lambda_{-i},a_{K})}u_i=0$, then for each $y\in(0,1]$, $\lim_{\lambda\rightarrow\infty} f^\lambda_i(y)=0$. The result concludes as in the previous case.
\end{proof}

\begin{proof}[\textit{Proof of Lemma~\ref{Lm:Mon=rQRE} and Theorem~\ref{Thm:EE=R}}]Since best response operators in games with control costs are regular QRFs, Lemma~\ref{Lm:Mon=rQRE} is a corollary of Lemma~\ref{Lm:cost=mon}. When a sequence of control cost functions vanishes, the corresponding best response operators are utility maximizing in the limit (See footnote~\ref{Footnote:vanDammeT4.3.1}). Thus, Theorem~\ref{Thm:EE=R} is a corollary of  Theorem~\ref{Th:E=C}.
\end{proof}


\bibliography{ref-BC}

\newpage
\section*{Appendix not for publication}

\begin{lemma}[\citealp{VanDamme-1991-Springer}]\rm\label{Lm:vanishingcontrol-nash}Let $\{f^\lambda\}_{\lambda\in \N}$ be a sequence of control costs functions that vanishes and $\{\sigma^\lambda\}_{\lambda\in\N}$ a corresponding convergent sequence of the control cost game associated with $\Gamma(u)$ and $f^\lambda$. Then, $\{\sigma^\lambda\}_{\lambda\in\N}$ converges to a Nash equilibrium of $\Gamma(u)$.
\end{lemma}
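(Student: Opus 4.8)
The plan is to argue by contradiction, combining the first-order characterization of control-cost equilibria with the monotonicity of each $f_i^\lambda$ and a squeeze between the derivative of $f_i^\lambda$ at two \emph{fixed} interior points. Write $\sigma^*\equiv\lim_{\lambda\to\infty}\sigma^\lambda$ and suppose $\sigma^*\notin\Nash(\Gamma)$. Then there are an agent $i\in N$ and actions $a_k,a_l\in A_i$ with $\sigma^*_i(a_k)>0$ and $U_i(\sigma^*_{-i},a_l)>U_i(\sigma^*_{-i},a_k)$. By continuity of the expected-utility operator there are $\delta>0$ and $\Lambda\in\N$ such that for each $\lambda\geq\Lambda$ we have $U_i(\sigma^\lambda_{-i},a_l)-U_i(\sigma^\lambda_{-i},a_k)>\delta$.

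First I would record the preliminary fact, asserted in footnote~\ref{Footnote:vanDammeT4.3.1}, that a vanishing sequence has vanishing derivatives, i.e. $(f_i^\lambda)'(x)\to 0$ for each $x\in(0,1]$. Since each $f_i^\lambda$ is decreasing, $(f_i^\lambda)'(x)\leq 0$ for every $\lambda$. For the reverse bound fix $h>0$ with $x-h>0$; strict convexity of $f_i^\lambda$ gives \[(f_i^\lambda)'(x)\geq\frac{f_i^\lambda(x)-f_i^\lambda(x-h)}{h},\] and the right-hand side tends to $0$ because both $f_i^\lambda(x)$ and $f_i^\lambda(x-h)$ tend to $0$ by the definition of vanishing (for $x=1$ use the secant on $[1-h,1]$ together with $f_i^\lambda(1)=0$). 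Hence $\liminf_\lambda (f_i^\lambda)'(x)\geq0$, which combined with $(f_i^\lambda)'(x)\leq0$ yields $(f_i^\lambda)'(x)\to0$.

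Next I would invoke the equilibrium characterization of the control-cost game \citep[Theorem~4.2.6]{VanDamme-1991-Springer}: since $f_i^\lambda(0)=\infty$ each $\sigma^\lambda$ is interior, and \[U_i(\sigma^\lambda_{-i},a_l)-U_i(\sigma^\lambda_{-i},a_k)=(f_i^\lambda)'(\sigma^\lambda_i(a_l))-(f_i^\lambda)'(\sigma^\lambda_i(a_k)).\] For $\lambda\geq\Lambda$ the left-hand side exceeds $\delta>0$, and because $f_i^\lambda$ is strictly convex its derivative is strictly increasing, so this forces $\sigma^\lambda_i(a_l)>\sigma^\lambda_i(a_k)$. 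Set $c\equiv\sigma^*_i(a_k)/2>0$; convergence $\sigma^\lambda_i(a_k)\to\sigma^*_i(a_k)$ gives $\sigma^\lambda_i(a_k)\geq c$ for all large $\lambda$, and therefore $c\leq\sigma^\lambda_i(a_k)<\sigma^\lambda_i(a_l)\leq1$.

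Finally I would close the contradiction by a squeeze that avoids any uniform-convergence machinery. Because $(f_i^\lambda)'$ is increasing and both arguments lie in $[c,1]$, \[\delta<(f_i^\lambda)'(\sigma^\lambda_i(a_l))-(f_i^\lambda)'(\sigma^\lambda_i(a_k))\leq(f_i^\lambda)'(1)-(f_i^\lambda)'(c).\] By the preliminary fact the two fixed-point values $(f_i^\lambda)'(1)$ and $(f_i^\lambda)'(c)$ each converge to $0$, so the right-hand side tends to $0$, contradicting the strict lower bound $\delta$. The main obstacle is this derivative step: the hypothesis constrains only the \emph{levels} $f_i^\lambda$, so one must first extract vanishing of the \emph{slopes}, and then keep the two equilibrium probabilities bounded away from $0$ so that monotonicity of the derivative can sandwich their slope difference between the values at the fixed points $c$ and $1$. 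Everything else reduces to continuity of the payoff operator and the cited first-order conditions.
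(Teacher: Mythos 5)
Your proof is correct and follows essentially the same route as the paper's: both arguments rest on van Damme's first-order characterization of control-cost equilibria (Theorem 4.2.6) together with the fact that a vanishing sequence of control cost functions has pointwise-vanishing derivatives, and both derive a contradiction between a utility gap bounded below by a positive constant and a derivative difference that must tend to zero. Your write-up is slightly more self-contained than the paper's---you actually prove the vanishing-slope claim via the convexity secant inequality, where the paper only asserts it in footnote~\ref{Footnote:vanDammeT4.3.1}, and your squeeze between the fixed points $c$ and $1$ replaces the paper's case split on $\sigma_i(a_i)=0$ versus $\sigma_i(a_i)>0$ and its subsequence extraction---but the underlying argument is the same.
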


\begin{proof}[\textit{Proof of Lemma~\ref{Lm:vanishingcontrol-nash}}]Let $\{\sigma^\lambda\}_{\lambda\in\N}$ be a convergent sequence such that for each $\lambda\in\N$, $\sigma^\lambda$ is an equilibrium of the control cost game associated with $\Gamma(u)$ and $f^\lambda$. Let $\sigma:= \lim_{\lambda\rightarrow\infty}\sigma^\lambda$. Let $i\in N$ and $a_i\in A_i$ be a best response to $\sigma_{-i}$ for agent $i$ in $\Gamma(u)$. Suppose that $a_k\in A_i$ is not a best response to $\sigma_{-i}$ for agent $i$. We prove that $\sigma_i(a_k)=0$. Since as $\lambda\rightarrow\infty$, $\sigma^\lambda\rightarrow\sigma$, we also have that $\sigma_i^\lambda(a_i)\rightarrow \sigma_i(a_i)$, $\sigma_i^\lambda(a_k)\rightarrow \sigma_i(a_k)$, $\exputil_{(\sigma^\lambda_{-i},a_i)}u_i\rightarrow \exputil_{(\sigma_{-i},a_i)}u_i$, and $\exputil_{(\sigma^\lambda_{-i},a_k)}u_i\rightarrow \exputil_{(\sigma_{-i},a_k)}u_i$. Thus, there is $\Lambda\in\N$ such that for each $\lambda\geq \Lambda$, $\sigma_i^\lambda(a_k)\leq \sigma_i^\lambda(a_i)$. Suppose first that $\sigma_i(a_i)=0$. Since $\sigma_i^\lambda(a_i)\rightarrow 0$, $\sigma_i^\lambda(a_k)\rightarrow0$. Suppose then that $\sigma_i(a_i)>0$.  By \citep[Theorem 4.2.6]{VanDamme-1991-Springer}, for each $\lambda\in\N$,
\[\exputil_{(\sigma^\lambda_{-i},a_l)}u_i-\exputil_{(\sigma^\lambda_{-i},a_k)}u_i=(f^\lambda)_i'(\sigma^\lambda_i(a_l))-(f^\lambda)_i'(\sigma^\lambda_i(a_k)).\]
The left side of the expression above converges to a positive number. Since $\sigma^\lambda_i(a_l)\rightarrow\sigma_i(a_l)>0$ and $\{f^\lambda\}_{\lambda\in\N}$ vanishes, $(f^\lambda)_i'(\sigma^\lambda_i(a_l))\rightarrow 0$. Thus, $\sigma^\lambda_i(a_k))\rightarrow0$, for otherwise there is a subsequence of $\{\sigma^\lambda_i(a_k))\}_{\lambda\in\N}$ that converges in the interior of $(0,1]$. If this is so the right side of the equation above converges to zero. This is a contradiction.
\end{proof}

The following proposition formally states our claims in Example~\ref{Ex:Gamma_c}

\begin{proposition}\rm\label{Prop:Gamma_c}Consider the game $\Gamma_c$ in Table~\ref{Tab:Gamma_c}. Then, for each $c>0$,
  \[\begin{array}{l}\Nash(\Gamma_c)=\{(a_1,b_1), (a_2,b_2), (a_3,b_3)\},
  \\\TNash(\Gamma_c)=\UNash(\Gamma_c)=\{(a_1,b_1), (a_2,b_2)\},\\\PNash(\Gamma_c)=\{(a_1,b_1)\}.\end{array}\]
Moreover,
\[\ENash(\Gamma_c)=\left\{\begin{array}{ll}\{(a_1,b_1)\}&\textrm{if }\min\{c_1,c_2\}\leq 1,\\\{(a_1,b_1),(a_2,b_2)\}&\textrm{Otherwise. }\end{array}\right.\]
\end{proposition}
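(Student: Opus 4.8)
The plan is to reduce everything to the three pairwise expected-utility comparisons for each player and then invoke Theorem~\ref{Thm:EE=APP-interior} for the empirical part. Writing player~2's strategy as $(q_1,q_2,q_3)$ on $(b_1,b_2,b_3)$ and player~1's as $(p_1,p_2,p_3)$, a direct computation gives $U_1(\sigma_2,a_1)-U_1(\sigma_2,a_2)=q_1-c_1q_3$ and $U_1(\sigma_2,a_2)-U_1(\sigma_2,a_3)=7(q_1+q_2)+c_1q_1\ge 0$, together with the symmetric identities for player~2. The second identity shows $a_2$ weakly dominates $a_3$ (and $b_2$ weakly dominates $b_3$), which yields $\UNash(\Gamma_c)=\{(a_1,b_1),(a_2,b_2)\}$; it also shows $a_3$ can be a best response only when $q_3=1$, so (symmetrically) any equilibrium other than $(a_3,b_3)$ has $p_3=q_3=0$ and hence lies in the top-left $2\times2$ block, which is exactly $\Gamma_1$ of Example~\ref{Ex:Gamma_1} with equilibria $(a_1,b_1)$ and $(a_2,b_2)$; this gives the Nash set. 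The perfect and proper claims are then routine: $(a_1,b_1)$ is strict, hence both perfect and proper; $(a_2,b_2)$ is perfect by trembling toward $b_3$ fast enough that $q_1<c_1q_3$ (making $a_2$ a best response) and symmetrically; and $(a_2,b_2)$ fails properness because, near it, $U_1(a_1)>U_1(a_3)$ forces $p_3^\lambda\le(1/\lambda)p_1^\lambda$ and symmetrically $q_3^\lambda\le(1/\lambda)q_1^\lambda$, whence $U_1(\sigma^\lambda_2,a_2)-U_1(\sigma^\lambda_2,a_1)=c_1q_3^\lambda-q_1^\lambda<0$ for large $\lambda$, contradicting $p_2^\lambda\to1$. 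This pins down $\PNash(\Gamma_c)=\{(a_1,b_1)\}$.

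For the empirical set I would work entirely with interior payoff monotone sequences, as permitted by Theorem~\ref{Thm:EE=APP-interior}. Since $a_2$ weakly dominates $a_3$, every payoff monotone distribution has $\sigma_1(a_2)\ge\sigma_1(a_3)$, so $\sigma_1(a_3)\le 1/2$ and $(a_3,b_3)$ is never approachable; hence $(a_3,b_3)\notin\ENash(\Gamma_c)$ for all $c$. Also $(a_1,b_1)\in\PNash(\Gamma_c)\subseteq\ENash(\Gamma_c)$ always. Thus the whole content reduces to deciding when $(a_2,b_2)\in\ENash(\Gamma_c)$.

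The crux, and the step I expect to be the main obstacle, is the characterization of $(a_2,b_2)$ via the tension between the two players' monotonicity constraints. For necessity, take interior payoff monotone $\sigma^\lambda\to(a_2,b_2)$, so $p^\lambda,q^\lambda\to(0,1,0)$. Near the limit $U_1(a_1)>U_1(a_3)$, so monotonicity forces $p_1^\lambda>p_3^\lambda$; meanwhile $q_2^\lambda>q_1^\lambda$ forces, via player~2's monotonicity, $U_2(b_2)>U_2(b_1)$, i.e.\ $p_1^\lambda<c_2p_3^\lambda$. Chaining $p_3^\lambda<p_1^\lambda<c_2p_3^\lambda$ and dividing by $p_3^\lambda>0$ gives $c_2>1$, and the symmetric argument gives $c_1>1$, so $\min\{c_1,c_2\}>1$ is necessary. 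For sufficiency when $\min\{c_1,c_2\}>1$, I would exhibit a symmetric sequence $p^\lambda=q^\lambda=(r_\lambda,s_\lambda,t_\lambda)$ with $t_\lambda<r_\lambda<\min\{c_1,c_2\}\,t_\lambda$, $s_\lambda=1-r_\lambda-t_\lambda$, and $t_\lambda\to0$; such $r_\lambda$ exists precisely because $\min\{c_1,c_2\}>1$. A short check confirms $U_1(a_2)>U_1(a_1)>U_1(a_3)$ for small $t_\lambda$ (the first inequality being exactly $r_\lambda<c_1t_\lambda$), so the strict ordering $s_\lambda>r_\lambda>t_\lambda$ matches the utility ordering and $\sigma^\lambda$ is interior, payoff monotone, and converges to $(a_2,b_2)$. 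Combining the three cases delivers the stated formula, with the threshold $\min\{c_1,c_2\}=1$ emerging exactly from the solvability of $t_\lambda<r_\lambda<\min\{c_1,c_2\}\,t_\lambda$.
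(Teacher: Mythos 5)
Your proof is correct and follows essentially the same route as the paper: weak dominance of $a_3$ and $b_3$ for the undominated/perfect upper bounds (and for excluding $(a_3,b_3)$ from $\ENash(\Gamma_c)$), trembles satisfying $q_1<c_1q_3$ for perfectness of $(a_2,b_2)$, strictness of $(a_1,b_1)$ in place of the paper's explicit proper sequence, and the ratio-chaining $p_3^\lambda<p_1^\lambda<c_2p_3^\lambda$ (with its symmetric counterpart) plus the symmetric interior construction $(r_\lambda,s_\lambda,t_\lambda)$ for the empirical threshold $\min\{c_1,c_2\}=1$. It is worth noting that your write-up is actually more complete than the paper's own appendix proof, which establishes only the Nash set, $\TNash(\Gamma_c)$, and $(a_1,b_1)\in\PNash(\Gamma_c)$, deferring both the exclusion of $(a_2,b_2)$ from $\PNash(\Gamma_c)$ and the entire $\ENash(\Gamma_c)$ formula to the informal discussion in Example~\ref{Ex:Gamma_c}; your necessity/sufficiency argument via Theorem~\ref{Thm:EE=APP-interior} is precisely the formalization that discussion asserts "can be easily" carried out.
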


\begin{proof}We first prove that\linebreak $\Nash(\Gamma_c)=\{(a_1,b_1), (a_2,b_2), (a_3,b_3)\}$.  Let $c:=(c_1,c_2)$ such that $c_1>0$ and $c_2>0$. One can easily see that the action profiles $(a_1,b_1)$, $(a_2,b_2)$, and $(a_3,b_3)$ are the only pure strategy Nash equilibria of $\Gamma_c$. Now, let $\sigma\in\Nash(\Gamma_c)$. If $\sigma_1(a_2)>0$, then $\sigma_2(b_3)=0$. Then, $\sigma_1(a_3)=0$. It follows that either $\sigma$ is equal to $(a_1,b_1)$ or $(a_2,b_2)$. Symmetry implies the same is true when $\sigma_2(b_2)>0$. Thus, suppose that $\sigma_1$ is not a pure strategy. Suppose that $\sigma_1(a_1)>0$, $\sigma_1(a_2)=0$, and $\sigma_2(b_2)=0$. Then $\sigma_2(b_3)=0$. Thus, $\sigma=(a_1,b_1)$. A symmetric argument shows that if $\sigma_2(b_1)>0$, $\sigma_1(a_2)=0$, and $\sigma_2(b_2)=0$, then $\sigma=(a_1,b_1)$.

It is well known that at each perfect equilibrium no agent plays a weakly dominated strategy. Clearly, $a_3$ and $b_3$ are weakly dominated for players~$1$ and~$2$, respectively. Thus, $\TNash(\Gamma_c)\subseteq \{(a_1,b_1), (a_2,b_2)\}$. Now, let $t:=\min\{c_1,c_2\}$, $\varepsilon:= \min\{t,t/(3c_1),t/(3c_2),1/3\}$, and for each $\lambda\in\N$, $\sigma^\lambda$ be the strategy profile for which $\sigma^\lambda_1(a_1):=\varepsilon c_2/(2\lambda t)$ and $\sigma^\lambda_1(a_3):=\varepsilon/(\lambda t)$; and $\sigma^\lambda_2(b_1):=\varepsilon c_1/(2\lambda t)$ and $\sigma^\lambda_2(b_3):=\varepsilon/(\lambda t)$. Then,
\[\begin{array}{l}E_{(\sigma^\lambda_2,a_1)}u^c_1=\varepsilon c_1/(2\lambda t)-(7+c_1)\varepsilon/(\lambda t)=-7\varepsilon/(\lambda c_1)-\varepsilon c_1/(2\lambda t),\\
E_{(\sigma^\lambda_2,a_2)}u^c_1=-7\varepsilon/(\lambda c_1),
\\E_{(\sigma^\lambda_2,a_2)}u^c_1=-(7+c_1)\varepsilon/(2\lambda)-7(1-\varepsilon/\lambda-\varepsilon/(\lambda c_1))-7\varepsilon/(\lambda c_1).\end{array}\]
Thus, $a_2$ is the unique best response to $\sigma^\lambda_2$ for agent $1$. Symmetry implies that $b_2$ is the unique best response to $\sigma^\lambda_1$ for agent $2$. Since $\sigma^\lambda_1$ places probability at most $1/\lambda$ in both $a_1$ and $a_3$; $\sigma^\lambda_2$ places probability at most $1/\lambda$ in both $b_1$ and $b_3$; and as $\lambda\rightarrow\infty$, $\sigma^\lambda\rightarrow(a_2,b_2)$, we have that $(a_2,b_2)\in \TNash(\Gamma_c)$.

Let $\Lambda>2$ be such that for each $\lambda\geq\Lambda$, $1-1/(2\lambda)-1/(3\lambda^2)>\max\{c_1/(3\lambda^2),c_2/(3\lambda^2),1/\lambda\}$. Let $\lambda\geq\Lambda$ and $\sigma^\lambda$ be the symmetric profile of strategies such that $\sigma^\lambda_1(a_2):= 1/(2\lambda)$ and $\sigma^\lambda_1(a_3):= 1/(3\lambda^2)$. Thus, $E_{(\sigma^\lambda_2,a_1)}u^c_1-E_{(\sigma^\lambda_2,a_2)}u^c_1=1-1/(2\lambda)-1/(3\lambda^2)-c_1/(3\lambda^2)>0$. Clearly, $E_{(\sigma^\lambda_2,a_2)}u^c_1>E_{(\sigma^\lambda_2,a_3)}u^c_1$. Similarly, $E_{(\sigma^\lambda_1,b_1)}u^c_2-E_{(\sigma^\lambda_1,b_2)}u^c_2>0$ and $E_{(\sigma^\lambda_1,b_2)}u^c_2>E_{(\sigma^\lambda_1,b_3)}u^c_2$. Since $\sigma^\lambda_1(a_1)>\sigma^\lambda_1(a_2)/\lambda$, $\sigma^\lambda_1(a_2)>\sigma^\lambda_1(a_3)/\lambda$, $\sigma^\lambda_2(b_1)>\sigma^\lambda_2(b_2)/\lambda$, and $\sigma^\lambda_2(b_2)>\sigma^\lambda_2(b_3)/\lambda$; and as $\lambda\rightarrow\infty$, $\sigma^\lambda\rightarrow(a_1,b_1)$, we have that $(a_1,b_1)\in \PNash(\Gamma_c)\subseteq \TNash(\Gamma_c)$.
\end{proof}

\end{document}